\newtheorem{assumption}{Standing Assumption}
\newtheorem{theorem}{Theorem}
\newtheorem{definition}{Definition}
\newtheorem{remark}{Remark}
\newtheorem{proposition}{Proposition}
\newtheorem{example}{Example}
\newtheorem{lemma}{Lemma}
\newtheorem{exercise}{Exercise}
\DeclareAcronym{mmd}{short = MMD, long = maximum mean discrepancy}
\DeclareAcronym{rkhs}{short = RKHS, long = reproducing kernel Hilbert space}
\DeclareAcronym{qmc}{short = QMC, long = quasi Monte Carlo}
\DeclareAcronym{pdf}{short = PDF, long = probability density function}
\DeclareAcronym{ksd}{short = KSD, long = kernel Stein discrepancy}
\DeclareAcronym{mcmc}{short = MCMC, long = Markov chain Monte Carlo}
\DeclareAcronym{smc}{short = SMC, long = sequential Monte Carlo}
\DeclareMathOperator*{\argmin}{arg\,min}
\begin{document}

\title{Minimum Discrepancy Methods in Uncertainty Quantification\footnote{The lectures were prepared for the \'{E}cole Th\'{e}matique sur les Incertitudes en Calcul Scientifique (ETICS) in September 2021.}}
\author{Chris. J. Oates\footnote{Correspondence should be sent to \texttt{chris.oates@ncl.ac.uk}.} \\
\small Newcastle University}

\maketitle

\tableofcontents

\vspace{20pt}

These lectures concern the discrete approximation of objects that are in some sense infinite-dimensional.
This problem is ubiquitous to numerical computation in general.
Specifically, we will consider discrete approximation of probability distributions $P$ that may be defined on an infinite set, such as $[0,1]^d$ or $\mathbb{R}^d$.
See \Cref{fig: Graf}.
The basic questions here are: (1) how many states are required to achieve a given level of approximation? (2) how can such approximations be constructed?

\begin{figure}[t!]
\centering
\includegraphics[width = \textwidth]{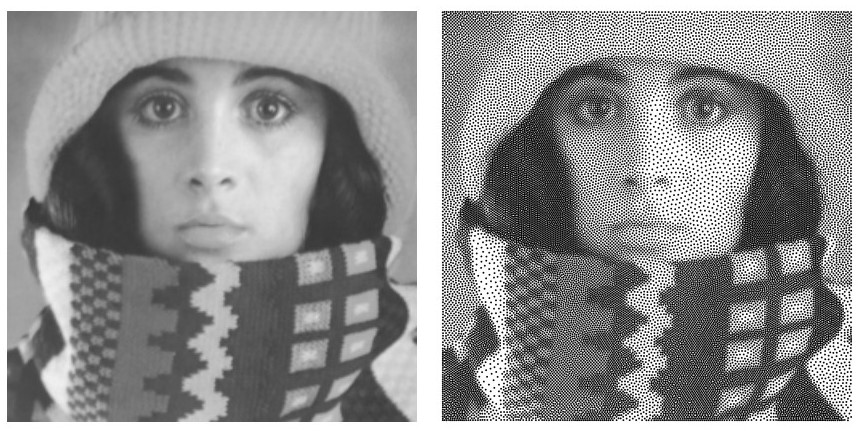}
\caption{\textit{Quantisation} is the act of approximating a probability distribution $P$, defined on an infinite set, with a discrete distribution $\sum_{i=1}^n w_i \delta(\bm{x}_i)$ supported on a finite set of states $\{\bm{x}_i\}_{i=1}^n$. 
Here $\delta(\bm{x}_i)$ denotes a Dirac distribution centred at $\bm{x}_i$.
The left image can be considered to represent a \ac{pdf} $p(\bm{x})$ for $P$, defined for $\bm{x} \in [0,1]^d$, and the dots in the right image can be considered to represent $\bm{x}_i$, with the size of the does representing $w_i$.
This image is due to \citet{graf2012quadrature}.}
\label{fig: Graf}
\end{figure}

\section{Classical Discrepancy Theory}

Here we start with some motivation from a classical perspective, which considers approximation of the uniform distribution $P$ on $[0,1]^d$ by an (un-weighted) collection of states $\{\bm{x}_i\}_{i=1}^n$.
For $\bm{x} \in [0,1]^d$, let $[\bm{0},\bm{x}) = [0,x_1) \times \dots \times [0,x_d)$.

\begin{definition}[Local discrepancy]
The \emph{local discrepancy} of a collection of states $\bm{x}_1,\dots,\bm{x}_n \in [0,1]^d$ at $\bm{a} \in [0,1]^d$ is
$$
\Delta(\bm{a}) = \Delta(\bm{a};\bm{x}_1,\dots,\bm{x}_n) = \frac{1}{n}\sum_{i=1}^n \mathbbm{1}_{[\bm{0},\bm{a})}(\bm{x}_i ) - \prod_{j=1}^d a_j .
$$
\end{definition}

\begin{definition}[Star discrepancy]
The \emph{star discrepancy} of $\bm{x}_1,\dots,\bm{x}_n \in [0,1]^d$ is
$$
D_n^* = D_n^*(\bm{x}_1,\dots,\bm{x}_n) = \sup_{\bm{a} \in [0,1)^d} |\Delta(\bm{a}; \bm{x}_1,\dots,\bm{x}_n)| .
$$
\end{definition}

\noindent See the illustration in \Cref{fig: star}.

\begin{figure}[t!]
\centering
\includegraphics[width = 0.8\textwidth]{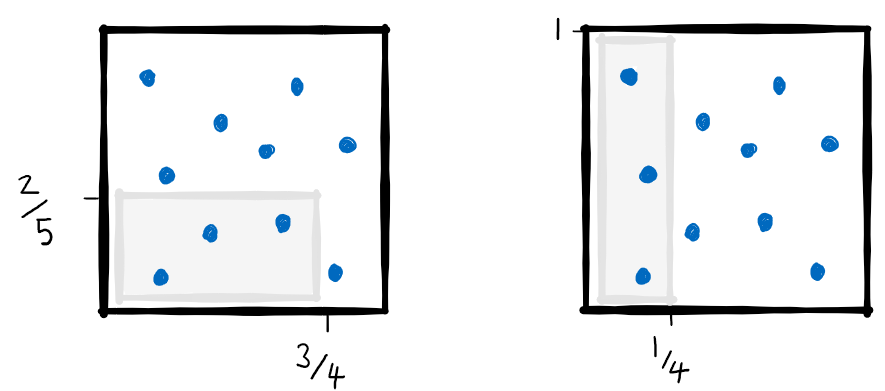}
\caption{\textit{Star discrepancy} is defined as the largest difference between the volume of a hyper-rectangle $[\bm{0},\bm{a})$ and the proportion of states $\bm{x}_i$ (blue dots) which are contained in the hyper-rectangle, the so-called \textit{local discrepancy} at $\bm{a} \in [0,1]^d$.
In the left hand panel, the proportion of states in the shaded hyper-rectangle is exactly equal to the volume of the hyper-rectangle, while this is not the case in the right hand panel.
}
\label{fig: star}
\end{figure}

\begin{remark}
In dimension $d=1$ is is clear that a regular grid $\bm{x}_1 = 0$, $\bm{x}_2 = 1/(n-1)$, $\dots$, $\bm{x}_n = 1$ minimises $D_n^*(\bm{x}_1,\dots,\bm{x}_n)$.
In this univariate setting, you may recognise that the star discrepancy from the \textit{Kolmogorov--Smirnov} uniformity test.
\end{remark}

\begin{remark}
It is simplifies discussion to \textit{anchor} the hyper-rectangle at $\bm{0}$, but one can also consider an alternative to star discrepancy with hyper-rectangles $[\bm{a},\bm{b})$ that are \textit{un-anchored}.
This alternative discrepancy takes values in $[D_n^*,2^d D_n^*]$, so in terms of fixed $d$ asymptotics its behaviour is identical.
\end{remark}

\begin{remark}
It may surprise you how little is known about star discrepancy.
In dimensions $d \leq 2$, it has been proven that there exist a constant $0 < C < \infty$ such that, for any choice of $\bm{x}_1,\dots,\bm{x}_n$ and $n \in \mathbb{N}$,
$$
C \frac{(\log n)^{d-1}}{n} \leq D_n^*(\bm{x}_1,\dots,\bm{x}_n),
$$
but for dimension $d > 2$ this bound is only conjectured.
\end{remark}

\begin{remark}
There are numerous algorithms which aim to generate collections of states with small star discrepancy, including the Halton sequence, the Hammersley set, Sobol sequences, and various non-independent sampling methods, which are sometimes collectively referred to as \emph{\ac{qmc}} methods.
For some of these methods it is known that, for an appropriate constant $0 < C < \infty$ and subsequence of values $n$ in $\mathbb{N}$,
$$
D_n^*(\bm{x}_1,\dots,\bm{x}_n) \leq C \frac{(\log n)^d}{n},
$$
meaning that the rate of convergence in $n$ is close to the conjectured optimal rate.
\ac{qmc} will not be discussed further, because our aim in these lectures is to deal with more general probability distributions $P$ that arise in applications of uncertainty quantification.
\end{remark}

\begin{remark}
A regular grid consisting of of $n = m^d$ states does \emph{not} minimise star discrepancy in dimension $d > 1$.
Indeed, for a regular grid (i.e. a $d$ dimensional Cartesian product of regular grids over the unit interval), one can show that 
$$
\frac{1}{2n^{1/d}} \leq D_n^* \leq \frac{d}{2n^{1/d}} .
$$
See \citet[][Remark 2.20]{leobacher2014introduction}.
\end{remark}

At this point it should be clear that even the simplest quantisation problems can be far from trivial.
The aim of the next section is to relate the slightly abstract notion of quantisation to concrete problems of numerical integration.

\paragraph*{Chapter Notes}

The presentation of star discrepancy followed Chapter 15 of \citet{mcbook}, which is currently a freely available online textbook.
The same reference provides an excellent introduction to \ac{qmc}.

\section{Numerical Cubature}

One of the most basic operations that one could hope to perform with a probability distribution is to compute expectations of random variables; i.e. to compute integrals of the form $\int f \mathrm{d}P$, or $\int f(\bm{x}) p(\bm{x}) \mathrm{d}\bm{x}$ if the probability distribution $P$ admits a \ac{pdf} $p(\bm{x})$.
In general such integrals do not possess a closed form and numerical integration (also called \textit{cubature}) will be required.
Quantisation is useful for cubature, since we can replace $P$ with a discrete approximation $\sum_{i=1}^n w_i \delta(\bm{x}_i)$ to obtain a closed form numerical approximation $\sum_{i=1}^n w_i f(\bm{x}_i)$ to the integral.
Approximations of this form are sometimes called \textit{cubature rules}.
In this section we will see how star discrepancy can be used to analyse the accuracy of cubature rules in the case where $P$ is a uniform distribution on $[0,1]^d$.

\subsection{Koksma--Hlawka Inequality}

Let $\bm{x}_{\mathfrak{u}}$ denote the components of a $d$-dimensional vector $\bm{x}$ that are indexed by the set $\mathfrak{u} \subseteq \{1,\dots,d\}$.
The shorthand $f(\bm{x}_{\mathfrak{u}},\bm{1})$ will be used to represent $f(\bm{y})$, where the vector $\bm{y}$ is defined by $y_i = x_i$ if $i \in \mathfrak{u}$ and otherwise $y_i = 1$.
The mixed partial derivative of $f$ with respect to each of the co-ordinates in $\mathfrak{u}$ is denoted $\partial^{|\mathfrak{u}|}f/\partial \bm{x}_{\mathfrak{u}}$, again for $\mathfrak{u} \subseteq \{1,\dots,d\}$.

\begin{definition}[Variation] \label{def: var}
For $f : [0,1]^d \rightarrow \mathbb{R}$ with continuous mixed partial derivatives, we define the \emph{variation} of $f$ to be
$$
\|f\|_1 = \sum_{\mathfrak{u} \subseteq \{1,\dots,d\} } \int_{[0,1]^{|\mathfrak{u}|}} \left| \frac{\partial^{|\mathfrak{u}|}f}{\partial \bm{x}_{\mathfrak{u}}}(\bm{x}_{\mathfrak{u}} , \bm{1}) \right| \mathrm{d}\bm{x} ,
$$
where the sum runs over all $2^d$ subsets $\mathfrak{u} \subseteq \{1,\dots,d\}$.
\end{definition}

The term ``variation'' is overloaded in the literature and we use it only informally to give a name to the norm defined in \Cref{def: var}.
The concept of variation enables the accuracy of cubature rules to be analysed:

\begin{theorem}[Koksma--Hlawka inequality] \label{thm: KH}
Let $f : [0,1]^d \rightarrow \mathbb{R}$ have continuous mixed partial derivatives.
Then
\begin{equation}
\left| \frac{1}{n} \sum_{i=1}^n f(\bm{x}_i) - \int_{[0,1]^d} f(\bm{x}) \mathrm{d}\bm{x} \right| \leq \|f\|_1 D_n^*(\bm{x}_1,\dots,\bm{x}_n) . \label{eq: KH}
\end{equation}
\end{theorem}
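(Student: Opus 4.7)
The plan is to derive the inequality by first expressing $f(\bm{x})$ via a multi-dimensional analogue of the fundamental theorem of calculus, then substituting this representation into the cubature error and recognising a local discrepancy in each of the resulting $2^d$ terms. Starting from the one-dimensional identity $f(x) = f(1) - \int_0^1 \mathbbm{1}_{[0,y)}(x)\, f'(y)\, \mathrm{d}y$ and iterating coordinate-by-coordinate, I would establish
$$
f(\bm{x}) = \sum_{\mathfrak{u} \subseteq \{1,\dots,d\}} (-1)^{|\mathfrak{u}|} \int_{[0,1]^{|\mathfrak{u}|}} \mathbbm{1}_{[\bm{0},\bm{y}_{\mathfrak{u}})}(\bm{x}_{\mathfrak{u}}) \frac{\partial^{|\mathfrak{u}|}f}{\partial \bm{x}_{\mathfrak{u}}}(\bm{y}_{\mathfrak{u}},\bm{1}) \,\mathrm{d}\bm{y}_{\mathfrak{u}},
$$
by induction on $d$, with the inductive step amounting to the one-dimensional identity applied in the newly introduced coordinate. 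This combinatorial identity is the step I expect to require the most care.

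Next, I would substitute this representation into both $\frac{1}{n}\sum_{i=1}^n f(\bm{x}_i)$ and $\int_{[0,1]^d} f(\bm{x})\, \mathrm{d}\bm{x}$, and use Fubini to exchange the order of summation and integration. The $\mathfrak{u} = \emptyset$ contribution, being constant in $\bm{x}$, cancels. For each non-empty $\mathfrak{u}$, integrating $\mathbbm{1}_{[\bm{0},\bm{y}_{\mathfrak{u}})}(\bm{x}_{\mathfrak{u}})$ against the uniform measure on $[0,1]^d$ yields $\prod_{j \in \mathfrak{u}} y_j$, so the contribution to the error reduces to
$$
(-1)^{|\mathfrak{u}|} \int_{[0,1]^{|\mathfrak{u}|}} \left[ \frac{1}{n}\sum_{i=1}^n \mathbbm{1}_{[\bm{0},\bm{y}_{\mathfrak{u}})}(\bm{x}_{i,\mathfrak{u}}) - \prod_{j \in \mathfrak{u}} y_j \right] \frac{\partial^{|\mathfrak{u}|}f}{\partial \bm{x}_{\mathfrak{u}}}(\bm{y}_{\mathfrak{u}},\bm{1}) \,\mathrm{d}\bm{y}_{\mathfrak{u}}.
$$
The bracketed quantity is exactly the local discrepancy $\Delta(\bm{a};\bm{x}_1,\dots,\bm{x}_n)$ at the point $\bm{a}$ with $a_j = y_j$ for $j \in \mathfrak{u}$ and $a_j = 1$ otherwise, so its absolute value is bounded uniformly by $D_n^*$ (passing to the closure by continuity if required, since $D_n^*$ is defined as a supremum over $[0,1)^d$).

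Finally, the triangle inequality and summation over $\mathfrak{u}$ deliver the bound $D_n^* \|f\|_1$, since summing the $L^1$ norms of the mixed partials over all $\mathfrak{u}$ reproduces the definition of $\|f\|_1$ (the $\mathfrak{u} = \emptyset$ summand $|f(\bm{1})|$ only enlarges the right-hand side, so is harmless). Beyond the iterated integration by parts in the first step, the principal obstacle is combinatorial bookkeeping: correctly pairing each subset $\mathfrak{u}$ with an anchored hyper-rectangle of the form appearing in the definition of $D_n^*$, and carefully tracking the signs $(-1)^{|\mathfrak{u}|}$ after exchanging orders of integration.
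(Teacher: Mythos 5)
Your proposal is correct, but it takes a genuinely different route from the paper's. The paper proves \Cref{thm: KH} by exhibiting the specific reproducing kernel $k(\bm{x},\bm{y}) = \prod_{i=1}^d (1 + \min(1-x_i,1-y_i))$ from \Cref{lem: Hilbert space}, computing the cubature error representer $e(\bm{x})$ for this kernel, differentiating to show $\partial^{|\mathfrak{u}|} e / \partial \bm{x}_{\mathfrak{u}}(\bm{x}_{\mathfrak{u}},\bm{1}) = (-1)^{|\mathfrak{u}|}\Delta(\bm{x}_{\mathfrak{u}},\bm{1})$, and then pairing $f$ against $e$ via the explicit inner product \eqref{eq: inner prod}. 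You instead iterate the one-dimensional fundamental-theorem-of-calculus identity $f(x) = f(1) - \int_0^1 \mathbbm{1}_{[0,y)}(x) f'(y)\,\mathrm{d}y$ coordinate-by-coordinate, obtain the Hlawka--Zaremba-type expansion of $f$ over subsets $\mathfrak{u}$, substitute into the cubature error, and recognise local discrepancies directly. This is exactly the multivariate generalisation of \Cref{lem: Owen 15.1} (which the paper presents only for $d=1$), and it corresponds to the classical reproducing-kernel-free proof the chapter notes attribute to Kuipers and Niederreiter. The arithmetic you would carry out is essentially the same in both proofs --- the same subsets $\mathfrak{u}$, signs $(-1)^{|\mathfrak{u}|}$, and local discrepancies $\Delta(\bm{y}_{\mathfrak{u}},\bm{1})$ appear --- but your route is elementary and self-contained, whereas the paper's route is deliberately chosen to introduce and exercise the RKHS machinery (kernel, representer, inner product) that the remainder of the notes rely on for \ac{mmd} and \ac{ksd}. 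Your proof buys transparency at the cost of not foreshadowing that framework. One small point both proofs share and glide over: the expression $\Delta(\bm{y}_{\mathfrak{u}},\bm{1})$ has some coordinates equal to $1$, while $D_n^*$ is a supremum over $[0,1)^d$; you flag this correctly (pass to the closure by a monotone limit), and the paper silently replaces $[0,1)^d$ by $[0,1]^d$ in its final line.
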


\begin{remark}
The term $\|f\|_1$ in \eqref{eq: KH} quantifies the complexity of the integrand, while the star discrepancy quantifies the suitability of the states $\{\bm{x}_i\}_{i=1}^n$.
Thus the quality of the set $\{\bm{x}_i\}_{i=1}^n$ as a quantisation of $P$ controls the accuracy of the cubature rule.
\end{remark}

\begin{remark}
The $f(\bm{1})$ term from $\|f\|_1$ can actually be removed, since the left hand side of \eqref{eq: KH} is invariant to $f(\bm{1})$, meaning that we can apply \eqref{eq: KH} to the function $\bm{x} \mapsto f(\bm{x}) - f(\bm{1})$ instead, to obtain a tighter bound. 
If we do that, then we obtain (up to small technicalities) the original formulation of Koksma--Hlawka.
\end{remark}

Later we will prove \Cref{thm: KH} in full, but for now we aim to present an elementary proof of \Cref{thm: KH} in the case $d=1$.
For this we need the following:

\begin{lemma} \label{lem: Owen 15.1}
Let $f : [0,1] \rightarrow \mathbb{R}$ be continuously differentiable and let $x_1,\dots,x_n \in [0,1]$.
Then
$$
\frac{1}{n} \sum_{i=1}^n f(x_i) - \int_0^1 f(x) \mathrm{d}x = - \int_0^1 \Delta(x) f'(x) \mathrm{d}x
$$
where $\Delta(x) = \Delta(x;x_1,\dots,x_n)$ is the local discrepancy.
\end{lemma}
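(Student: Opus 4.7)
The plan is to recognise that $\Delta(x) = F_n(x) - x$, where $F_n(x) = \tfrac{1}{n}\sum_{i=1}^n \mathbbm{1}_{[0,x)}(x_i)$ is the empirical cumulative distribution function of the states, and then to split the right-hand side integral accordingly:
$$
- \int_0^1 \Delta(x) f'(x)\,\mathrm{d}x \;=\; -\int_0^1 F_n(x) f'(x)\,\mathrm{d}x \;+\; \int_0^1 x\, f'(x)\,\mathrm{d}x .
$$
Each of the two pieces can be evaluated directly. For the second piece I would apply integration by parts, using that $f$ is continuously differentiable, to obtain $f(1) - \int_0^1 f(x)\,\mathrm{d}x$.

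For the first piece I would exchange sum and integral (the sum is finite, so no subtlety) and use the identity $\mathbbm{1}_{[0,x)}(x_i) = \mathbbm{1}_{(x_i,1]}(x)$ for $x_i,x \in [0,1]$. This turns each summand into $\int_{x_i}^1 f'(x)\,\mathrm{d}x$, which by the fundamental theorem of calculus equals $f(1) - f(x_i)$. Hence
$$
-\int_0^1 F_n(x) f'(x)\,\mathrm{d}x \;=\; \frac{1}{n}\sum_{i=1}^n f(x_i) - f(1) .
$$

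Adding the two pieces, the boundary term $f(1)$ cancels and we recover $\tfrac{1}{n}\sum_{i=1}^n f(x_i) - \int_0^1 f(x)\,\mathrm{d}x$, as required. There is no real obstacle: the only thing to watch is the convention about whether the interval $[0,x)$ is open or closed at its right endpoint, which affects the values of $F_n$ at the atoms $x_i$ but only on a set of Lebesgue measure zero and so does not affect the integral against the continuous density $f'$. If one prefers a single-line argument, the whole identity can alternatively be read off as Riemann--Stieltjes integration by parts applied to $f$ and $F_n(x) - x$, with the latter vanishing at both endpoints $x=0$ and $x=1^-$ in an averaged sense, but spelling the calculation out as above is more elementary and avoids appealing to Stieltjes machinery.
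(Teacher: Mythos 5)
Your proof is correct and is essentially the same argument as the paper's, just run in the opposite direction: the paper starts from the cubature error, substitutes $f(x)=f(1)-\int_x^1 f'(y)\,\mathrm{d}y$, and swaps the order of integration (Fubini) to arrive at $-\int_0^1\Delta f'$, while you start from $-\int_0^1\Delta f'$, split off the $x f'(x)$ term (handled by integration by parts, the one-dimensional counterpart of the paper's Fubini step), and evaluate the $F_n$ term via the same indicator identity $\mathbbm{1}_{[0,x)}(x_i)=\mathbbm{1}_{(x_i,1]}(x)$ and the fundamental theorem of calculus. Your remark on the measure-zero irrelevance of the endpoint convention is a small but welcome clarification that the paper leaves implicit.
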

\begin{proof}
Note that the regularity assumption allows us to write
\begin{equation}
f(x) = f(1) - \int_x^1 f'(y) \mathrm{d}y . \label{eq: calc id}
\end{equation}
Substituting \eqref{eq: calc id} into the expression for the cubature error, we obtain
\begin{align*}
\frac{1}{n} \sum_{i=1}^n f(x_i) - \int_0^1 f(x) \mathrm{d}x & = \int_0^1 \int_x^1 f'(y) \mathrm{d}y \mathrm{d}x - \frac{1}{n} \sum_{i=1}^n \int_{x_i}^1 f'(y) \mathrm{d}y \\
& = \int_0^1 \int_0^y f'(y) \mathrm{d}x \mathrm{d}y - \int_0^1 \frac{1}{n} \sum_{i=1}^n \mathbbm{1}_{(x_i,1]}(y) f'(y) \mathrm{d}y \\
& = \int_0^1 f'(y) \underbrace{ \left[ y - \frac{1}{n} \sum_{i=1}^n \mathbbm{1}_{(x_i,1]}(y) \right] }_{= - \Delta(y)} \mathrm{d}y
\end{align*}
as required.
\end{proof}

\begin{proof}[Proof of \Cref{thm: KH} ($d=1$)]
From \Cref{lem: Owen 15.1} we have
$$
\left| \frac{1}{n} \sum_{i=1}^n f(x_i) - \int_0^1 f(x) \mathrm{d}x \right| \leq \int_0^1 \left| \Delta(x) f'(x) \right| \mathrm{d}x \leq \underbrace{ \sup_{x \in [0,1]} |\Delta(x)| }_{ = D_n^*} \underbrace{ \int_0^1 \left| f'(x) \right| \mathrm{d}x }_{= \|f\|_1 - |f(1)| } ,
$$
as required.
\end{proof}

\paragraph*{Chapter Notes}

See Remark 2.19 in \citet{dick2010digital} for a detailed discussion of how \Cref{thm: KH} relates to the original formulation of Koksma and Hlawka.
\Cref{lem: Owen 15.1} and the proof of \Cref{thm: KH} for $d=1$ can be found in Section 2.2 in \citet{dick2010digital}; see also Theorem 15.1 in \citet{mcbook}.
The one-dimensional version of the Koksma--Hlawka inequality is sometimes called \textit{Koksma's inequality} or \textit{Zaremba's identity} \citep[][p18]{dick2010digital}.

\subsection{Cubature Error Representer}
\label{subsec: representers}

In this section and the next, we will introduce the mathematical tools that are needed to prove \Cref{thm: KH} in full.
These tools will also be useful later, when we consider practical algorithms for quantisation of general probability distributions $P$.
The aim is to generalise the concept of variation in \Cref{def: var}, to allow for functions $f$ of different regularity to be integrated.

The basic idea is as follows: we consider the set $\mathcal{S}(k)$ of all functions of the form $f(\bm{x}) = \sum_{i=1}^m b_i k(\bm{x} , \bm{y}_i)$, where $k$ is to be specified, the $\bm{y}_i$ are fixed states, and $n \in \mathbb{N}$.
The function $k$ determines the regularity of the elements in $\mathcal{S}(k)$; for example, if $k(\bm{x},\bm{y}) = \exp(-\|\bm{x} - \bm{y}\|)$ then the elements of $\mathcal{S}(k)$ are continuous but not differentiable, while if $k(\bm{x},\bm{y}) = \exp(-\|\bm{x} - \bm{y}\|^2)$ then the elements of $\mathcal{S}(k)$ are infinitely differentiable.
See \Cref{fig: rkhs}.
Since we aim to perform mathematical analysis, we will want to endow the set $\mathcal{S}(k)$ with mathematical structure that we can exploit.
It is clearly a vector space (over the reals) of functions when (pointwise) addition and scalar multiplication are defined.
In addition to that, we will want to make use of an inner product
$$
\langle f , g \rangle_{\mathcal{S}(k)} = \sum_{i=1}^m \sum_{j=1}^n b_i c_j k(\bm{y}_i , \bm{z}_j) , \quad f(\bm{x}) = \sum_{i=1}^m b_i k(\bm{x} , \bm{y}_i), \quad g(\bm{x}) = \sum_{j=1}^n c_j k(\bm{x} , \bm{z}_j) ,
$$
for which we must require that $k$ is \textit{symmetric} (i.e. $\langle f , g \rangle_{\mathcal{S}(k)} = \langle g , f \rangle_{\mathcal{S}(k)}$) and \textit{positive definite} (i.e. $\langle f , f \rangle_{\mathcal{S}(k)} > 0$ for all $f \neq 0$).
This inner product is useful because it satisfies a \textit{reproducing property}, meaning that 
\begin{align*}
\langle f , k(\cdot,\bm{x}) \rangle_{\mathcal{S}(k)} = f(\bm{x}) , 
\end{align*}
and suggesting the formal manipulation
\begin{align*}
\frac{1}{n} \sum_{i=1}^n f(\bm{x}_i) - \int_{[0,1]^d} f(\bm{x}) \mathrm{d}\bm{x} & = \frac{1}{n} \sum_{i=1}^n \langle f , k(\cdot,\bm{x}_i) \rangle_{\mathcal{S}(k)} - \int_{[0,1]^d} \langle f , k(\cdot,\bm{y}) \rangle_{\mathcal{S}(k)} \mathrm{d}\bm{y} \\
& \stackrel{?}{=} \Bigg\langle f , \underbrace{ \frac{1}{n} \sum_{i=1}^n k(\cdot,\bm{x}_i) - \int_{[0,1]^d} k(\cdot,\bm{y}) \mathrm{d}\bm{y} }_{= e(\cdot) } \Bigg\rangle_{\mathcal{S}(k)} ,
\end{align*}
where $e(\cdot)$ is referred to as the \textit{representer} of the \emph{cubature error}.
The representer $e(\cdot)$ would completely characterise the error of the (un-weighted) cubature rule based on the states $\{\bm{x}_i\}_{i=1}^n$.
For example, if $e(\bm{x}) = 0$ for all $\bm{x}$ then the cubature rule would be exact for all integrands $f \in \mathcal{S}(k)$.

Unfortunately the inner product space $\mathcal{S}(k)$ is not \textit{complete}, in the sense that limits of functions of the form $f(\bm{x}) = \sum_{i=1}^m b_i k(\bm{x} , \bm{y}_i)$ need not be elements of the set $\mathcal{S}(k)$.
In particular, the integral $\int_{[0,1]^d} k(\cdot,\bm{y}) \mathrm{d}\bm{y}$ is not an element of $\mathcal{S}(k)$, meaning that the formal manipulation above is not well-defined.
For this technical reason we work with a larger set $\mathcal{H}(k)$, called the \textit{completion} of $\mathcal{S}(k)$, whose definition we present next.

\begin{figure}[t!]
\centering
\includegraphics[width = 0.4\textwidth]{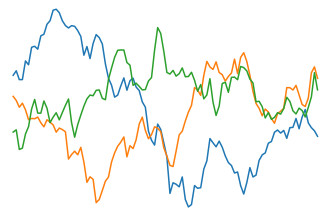} \hspace{20pt}
\includegraphics[width = 0.4\textwidth]{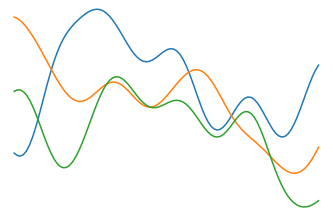}
\caption{\textit{\Acp{rkhs}:}
The left panel represents elements from a \ac{rkhs} whose kernel is non-differentiable, while the right panel corresponds to an infinitely differentiable kernel. }
\label{fig: rkhs}
\end{figure}

\subsection{Reproducing Kernel Hilbert Spaces}

The main mathematical tool that we will exploit is that of a \textit{reproducing kernel}:

\begin{definition}[Reproducing kernel Hilbert space] \label{def: rkhs}
Let $\mathcal{X}$ be a set and consider a symmetric and positive definite function $k : \mathcal{X} \times \mathcal{X} \rightarrow \mathbb{R}$.
Then a \emph{\ac{rkhs}} with \emph{reproducing kernel} (or simply \emph{kernel}) $k$ is an inner product space $\mathcal{H}(k)$ of functions $f : \mathcal{X} \rightarrow \mathbb{R}$, such that
\begin{enumerate}
\item $k(\cdot,\bm{x}) \in \mathcal{H}(k)$ for all $\bm{x} \in \mathcal{X}$
\item $\langle f , k(\cdot,\bm{x}) \rangle_{\mathcal{H}(k)} = f(\bm{x})$ for all $\bm{x} \in \mathcal{X}$ and all $f \in \mathcal{H}(k)$.
\end{enumerate}
\end{definition}

\begin{remark}
Given a symmetric positive definite function $k$, it can be shown that there exists a unique \ac{rkhs} $\mathcal{H}(k)$.
Conversely, each \ac{rkhs} admits a unique reproducing kernel, and that kernel is symmetric and positive definite.
\end{remark}

In general it is difficult to characterise the inner product induced by a reproducing kernel, and hence the elements of the \ac{rkhs}.
However, there are a number of important cases where this can be carried out:

\begin{example} \label{eq: finite dim rkhs}
The linear span of a finite collection of functions $e_1(\bm{x}), \dots, e_p(\bm{x})$ can be endowed with the structure of an \ac{rkhs} with reproducing kernel $k(\bm{x},\bm{y}) = \sum_{i=1}^p e_i(\bm{x}) e_i(\bm{y})$.
The induced inner product is $\langle f , g \rangle_{\mathcal{H}(k)} = b_1 c_1 + \dots + b_p c_p$, where $f(\bm{x}) = \sum_{i=1}^n b_i e_i(\bm{x})$ and $g(\bm{x}) = \sum_{i=1}^n c_i e_i(\bm{x})$.
\end{example}

\begin{example} \label{lem: Hilbert space}
The kernel 
\begin{equation}
k(\bm{x},\bm{y}) = \prod_{i=1}^d \left( 1 + \min(1-x_i,1-y_i) \right), \qquad \bm{x}, \bm{y} \in [0,1]^d \label{eq: kernel for KH}
\end{equation}
reproduces a Hilbert space with inner product
\begin{equation}
\langle f , g \rangle_{\mathcal{H}(k)} = \sum_{\mathfrak{u} \subseteq \{1,\dots,d\}} \int_{[0,1]^s} \frac{\partial^{|\mathfrak{u}|}f}{\partial \bm{x}_{\mathfrak{u}}}(\bm{x}_{\mathfrak{u}} , \bm{1}) \frac{\partial^{|\mathfrak{u}|}g}{\partial \bm{x}_{\mathfrak{u}}}(\bm{x}_{\mathfrak{u}} , \bm{1}) \mathrm{d}\bm{x}_{\mathfrak{u}} . \label{eq: inner prod}
\end{equation}
\end{example}

\begin{assumption} \label{ass: cts}
For all reproducing kernels $k$ considered in the sequel, we assume that $(\bm{x},\bm{y}) \mapsto k(\bm{x},\bm{y})$ is a continuous function. 
\end{assumption}

\begin{remark} \label{lem: cub er rep}
Identical manipulation to that presented in \Cref{subsec: representers} shows that the (Riesz) representer of the cubature error is
$$
e(\bm{x}) = e(\bm{x}; \bm{x}_1,\dots,\bm{x}_n)  = \frac{1}{n} \sum_{i=1}^n k(\bm{x},\bm{x}_i) - \int_{[0,1]^d} k(\bm{x},\bm{y}) \mathrm{d}\bm{y} .
$$
Note that the integral of the kernel is well-defined from \Cref{ass: cts}.
The interchange of integral and inner product in \Cref{subsec: representers} requires justification; this will be provided later in \Cref{lem: kme}.
\end{remark}

Armed with reproducing kernels and the cubature error representer, we can now prove \Cref{thm: KH} in full.
In what follows we let $\bm{x}_{i,j}$ denote the $j$th coordinate of the vector $\bm{x}_i$ and we let $\bm{x}_{i,\mathfrak{u}}$ denote the components of the vector $\bm{x}_i$ that are indexed by the set $\mathfrak{u} \subseteq \{1,\dots,d\}$.

\begin{proof}[Proof of \Cref{thm: KH}]
Consider the kernel $k$ in \eqref{eq: kernel for KH};
$$
k(\bm{x},\bm{y}) = \prod_{i=1}^d \left( 1 + \min(1-x_i,1-y_i) \right) 
$$
and compute the cubature error representer
\begin{align*}
e(\bm{x}) & = \frac{1}{n} \sum_{i=1}^n k(\bm{x},\bm{x}_i) - \int_{[0,1]^d} k(\bm{x},\bm{y}) \mathrm{d}\bm{y}  \\
& = \frac{1}{n} \sum_{i=1}^n \prod_{j=1}^d \left( 1 + \min(1-x_i,1-\bm{x}_{i,j}) \right) - \prod_{i=1}^d \frac{3 - x_i^2}{2} .
\end{align*}
Then, for $\mathfrak{u} \subseteq \{1,\dots,d\}$ and $(\bm{x}_{\mathfrak{u}},\bm{1}) \notin \{ \bm{x}_1,\dots,\bm{x}_n\}$,
\begin{align}
\frac{\partial^{|\mathfrak{u}|}e}{\partial \bm{x}_{\mathfrak{u}}}(\bm{x}_{\mathfrak{u}} , \bm{1}) & = (-1)^{|\mathfrak{u}|} \underbrace{ \left( \frac{1}{n} \sum_{i=1}^n \mathbbm{1}_{[\bm{0}_{\mathfrak{u}} , \bm{x}_{\mathfrak{u}})}(\bm{x}_{i,\mathfrak{u}}) - \prod_{i \in \mathfrak{u}} x_i \right) }_{ = \Delta(\bm{x}_{\mathfrak{u}}, \bm{1}) }  \label{eq: deriv h}  .
\end{align}
The assumed regularity of $f$ ensures that $f \in \mathcal{H}(k)$.
Plugging  \eqref{eq: deriv h} into \eqref{eq: inner prod}, we obtain
\begin{align*}
\frac{1}{n} \sum_{i=1}^n f(\bm{x}_i) - \int_{[0,1]^d} f(\bm{x}) \mathrm{d}\bm{x} & = \langle f , e \rangle_{\mathcal{H}(k)} \\
& = \sum_{\mathfrak{u} \subseteq \{1,\dots,d\}} (-1)^{|\mathfrak{u}|} \int_{[0,1]^d} \frac{\partial^{|\mathfrak{u}|}f}{\partial \bm{x}_{\mathfrak{u}}}(\bm{x}_{\mathfrak{u}} , \bm{1})  \Delta(\bm{x}_{\mathfrak{u}}, \bm{1}) \mathrm{d}\bm{x}_{\mathfrak{u}} .
\end{align*}
Finally, taking absolute values and using the bound $|\Delta(\bm{x}_{\mathfrak{u}}, \bm{1})| \leq \sup_{\bm{x} \in [0,1]^d} |\Delta(\bm{x})| = D_n^*$, we arrive at \eqref{eq: KH}.
\end{proof}

\paragraph*{Chapter Notes}

There are several excellent introductions to the theory of reproducing kernels, including \citet{wendland2004scattered,berlinet2011reproducing}.
The presentation of the Koksma--Hlawka inequality from a reproducing kernel perspective follows Section 2.4 in \citet{dick2010digital}.
A proof of the Koksma--Hlawka inequality, which does not require reproducing kernels, can be found as Theorem 5.5 in Chapter 2 of \citet{kuipers2012uniform}.
\Cref{lem: Hilbert space} can be found in Section 2.4 of \citet{dick2010digital}.

\section{Maximum Mean Discrepancy} \label{sec: mmd}

Now we move beyond the uniform distribution $P = \mathcal{U}([0,1]^d)$ and consider general probability distributions $P$ on general\footnote{These lecture notes deliberately avoid discussion of measure theory, but to be fully rigorous we restrict attention to Borel measures $P$ defined on a topological space $\mathcal{X}$.} domains $\mathcal{X}$.
The aim is to present a modern treatment of discrepancy and cubature error, that generalises the previous results.

\begin{definition}[Kernel mean embedding]
For a kernel $k$ and a probability distribution $P$, we call $\mu_P = \int k(\cdot,\bm{x}) \mathrm{d}P(\bm{x})$ the \emph{kernel mean embedding} of $P$ in $\mathcal{H}(k)$, whenever it is well-defined (see \Cref{lem: kme}).
\end{definition}

\begin{figure}
\centering
\includegraphics[width = 0.8\textwidth]{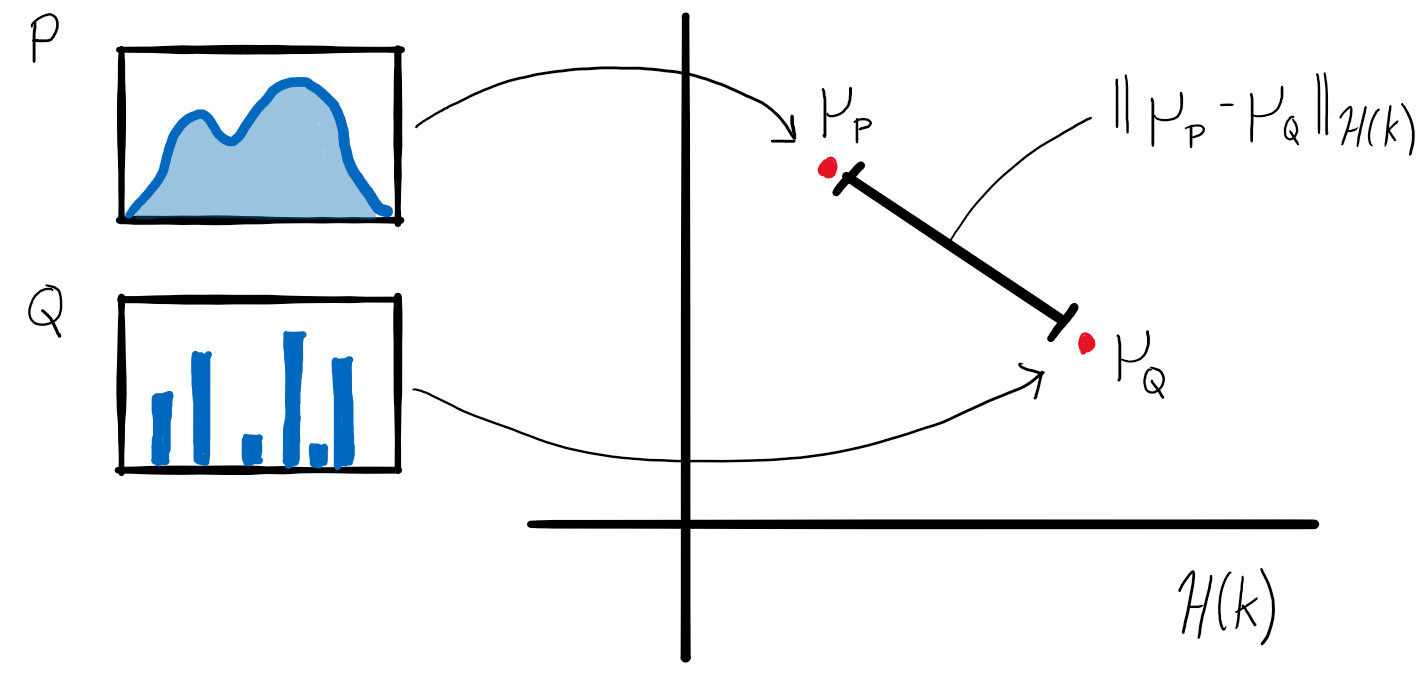}
\caption{\textit{Kernel mean embedding:} Two probability distributions $P$ and $Q$ are mapped to their respective elements $\mu_P$ and $\mu_Q$ in the \ac{rkhs} $\mathcal{H}(k)$. The distance (in $\mathcal{H}(k)$) between these kernel mean embeddings $\mu_P$ and $\mu_Q$ is called the \ac{mmd} between $P$ and $Q$.}
\label{fig: kme}
\end{figure}

\begin{lemma} \label{lem: kme}
If $\int \sqrt{k(\bm{x},\bm{x})} \mathrm{d}P(\bm{x}) < \infty$ then $\mu_P(\bm{x}) \in \mathcal{H}(k)$ and $\int f(\bm{x}) \mathrm{d}P(\bm{x}) = \langle f , \mu_P \rangle_{\mathcal{H}(k)}$.
\end{lemma}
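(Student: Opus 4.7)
The plan is to avoid dealing with the Bochner integral $\int k(\cdot,\bm{x})\,\mathrm{d}P(\bm{x})$ directly, and instead invoke the Riesz representation theorem to produce an element of $\mathcal{H}(k)$ representing the integration functional, then identify that element pointwise with the claimed formula for $\mu_P$. Throughout I will use the reproducing property and Cauchy--Schwarz as the only non-trivial inputs.

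First I would show that the linear functional $T_P : \mathcal{H}(k) \to \mathbb{R}$ defined by $T_P(f) = \int f(\bm{x})\,\mathrm{d}P(\bm{x})$ is well-defined and bounded. By the reproducing property, $f(\bm{x}) = \langle f, k(\cdot,\bm{x})\rangle_{\mathcal{H}(k)}$, so Cauchy--Schwarz yields the pointwise bound
\[
|f(\bm{x})| \;\leq\; \|f\|_{\mathcal{H}(k)} \, \sqrt{k(\bm{x},\bm{x})}.
\]
Integrating against $P$ and using the hypothesis $\int \sqrt{k(\bm{x},\bm{x})}\,\mathrm{d}P(\bm{x}) < \infty$ shows that $f$ is $P$-integrable and that
\[
|T_P(f)| \;\leq\; \|f\|_{\mathcal{H}(k)} \int \sqrt{k(\bm{x},\bm{x})}\,\mathrm{d}P(\bm{x}),
\]
so $T_P$ is a bounded linear functional on $\mathcal{H}(k)$. (Measurability of $f$ with respect to $P$ follows from the continuity assumption on $k$ in \Cref{ass: cts} together with the pointwise bound above, since continuous functions are Borel-measurable; $\bm{x}\mapsto k(\cdot,\bm{x})$ being continuous into $\mathcal{H}(k)$ implies $\bm{x}\mapsto f(\bm{x})$ is continuous and hence measurable for every $f\in\mathcal{H}(k)$.)

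Next, by the Riesz representation theorem for Hilbert spaces, there exists a unique $\tilde{\mu}_P \in \mathcal{H}(k)$ such that
\[
\int f(\bm{x})\,\mathrm{d}P(\bm{x}) \;=\; \langle f, \tilde{\mu}_P \rangle_{\mathcal{H}(k)} \qquad \text{for all } f \in \mathcal{H}(k).
\]
To identify $\tilde{\mu}_P$ with the function $\mu_P(\bm{y}) = \int k(\bm{y},\bm{x})\,\mathrm{d}P(\bm{x})$ claimed in the statement, I would evaluate the above display at the test function $f = k(\cdot,\bm{y})$. The left-hand side becomes $\int k(\bm{y},\bm{x})\,\mathrm{d}P(\bm{x}) = \mu_P(\bm{y})$, while the right-hand side is $\langle k(\cdot,\bm{y}), \tilde{\mu}_P\rangle_{\mathcal{H}(k)} = \tilde{\mu}_P(\bm{y})$ by the reproducing property. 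Hence $\mu_P(\bm{y}) = \tilde{\mu}_P(\bm{y})$ for every $\bm{y}$, so $\mu_P \in \mathcal{H}(k)$ and the integral representation of \Cref{lem: kme} holds.

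The main obstacle is really a conceptual one rather than a computational one: one must resist the temptation to define $\mu_P$ directly as a Hilbert-space-valued integral and then prove it lies in $\mathcal{H}(k)$, which requires developing Bochner integration. The Riesz trick sidesteps this by defining $\mu_P$ through duality and only afterwards recovering its pointwise formula via the reproducing property. The only remaining subtlety is ensuring $P$-integrability of arbitrary $f \in \mathcal{H}(k)$, which is handled cleanly by the Cauchy--Schwarz bound above combined with the hypothesis on $\int \sqrt{k(\bm{x},\bm{x})}\,\mathrm{d}P(\bm{x})$.
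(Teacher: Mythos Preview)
Your proof is correct and follows essentially the same route as the paper's own argument: bound the integration functional via the reproducing property and Cauchy--Schwarz, invoke Riesz to obtain a representer, then identify it pointwise with $\mu_P$ by testing against $k(\cdot,\bm{y})$. Your additional remark on measurability and the commentary on avoiding Bochner integration are welcome but not strictly needed here.
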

\begin{proof}
Consider the linear operator $Lf = \int f(\bm{x}) \mathrm{d}P(\bm{x})$.
Then
\begin{align}
|Lf| = \left| \int f(\bm{x}) \mathrm{d}P(\bm{x}) \right| & \leq \int |f(\bm{x})| \mathrm{d}P(\bm{x})  \label{eq: L1} \\
& = \int | \langle f , k(\cdot,\bm{x}) \rangle_{\mathcal{H}(k)} | \mathrm{d}P(\bm{x}) \label{eq: L2} \\
& \leq \int \|f\|_{\mathcal{H}(k)} \| k(\cdot,\bm{x}) \|_{\mathcal{H}(k)} | \mathrm{d}P(\bm{x}) \label{eq: L3} \\
& = \int \sqrt{k(\bm{x},\bm{x})} \mathrm{d}P(\bm{x})  \|f\|_{\mathcal{H}(k)}  \nonumber
\end{align}
where \eqref{eq: L1} is Jensen's inequality, \eqref{eq: L2} is the reproducing property, and \eqref{eq: L3} is Cauchy--Schwarz.
This shows that $L$ is a \textit{bounded linear operator} from $\mathcal{H}(k)$ to $\mathbb{R}$.
Thus, from the Riesz representation theorem, there exists $h \in \mathcal{H}(k)$ such that $Lf = \langle f , h \rangle_{\mathcal{H}(k)}$.
Taking $f(\bm{x}) = k(\bm{y},\bm{x})$ and using the reproducing property leads to $\int k(\bm{y},\bm{x}) \mathrm{d}P(\bm{x}) = Lf = \langle f , h \rangle_{\mathcal{H}(k)} = h(\bm{y})$, so that $h = \int k(\cdot,\bm{x}) \mathrm{d}P(\bm{x})$, and so $\mu_p = h \in \mathcal{H}(k)$ with $Lf = \langle f , \mu_P \rangle_{\mathcal{H}(k)}$, as was claimed.
\end{proof}

\begin{assumption} \label{ass: trace class}
For all reproducing kernels $k$ and probability distributions $P$ considered in the sequel, we assume that $\int \sqrt{k(\bm{x},\bm{x})} \mathrm{d}P(\bm{x}) < \infty$.
\end{assumption}

In this more general setting the Riesz representer of the cubature error is the difference $e = \mu_{Q_n} - \mu_P$ of two kernel mean embeddings, where $Q_n = \sum_{i=1}^n w_i \delta(\bm{x}_i)$ is the discrete distribution on which the cubature rule is based.
i.e.
\begin{align*}
\sum_{i=1}^n w_i f(\bm{x}_i) - \int f \mathrm{d}P = \langle f , \mu_{Q_n} - \mu_P \rangle_{\mathcal{H}(k)} , \quad \mu_{Q_n} = \sum_{i=1}^n w_i k(\cdot, \bm{x}_i), \quad \mu_P = \int k(\cdot,\bm{x}) \mathrm{d}P(\bm{x}) .
\end{align*}

There are several different ways to systematically assess the performance of a cubature rule, but here we focus on a \textit{worst case} assessment:

\begin{definition}[Maximum mean discrepancy]
The \emph{\ac{mmd}} between two distributions $P$ and $Q$ is
\begin{align*}
D_k(P,Q) 
& = \sup_{\|f\|_{\mathcal{H}(k)} \leq 1} \left| \int f \mathrm{d}P - \int f \mathrm{d}Q  \right| ,
\end{align*}
also called the \emph{worst case cubature error} in the unit ball of $\mathcal{H}(k)$.
\end{definition}

A similar argument to \Cref{lem: cub er rep} shows that:

\begin{lemma} \label{lem: mmd ad embed}
$D_k(P,Q) = \|\mu_P - \mu_Q \|_{\mathcal{H}(k)}$.
\end{lemma}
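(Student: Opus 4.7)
The plan is to combine the kernel mean embedding identity from \Cref{lem: kme} with the Cauchy--Schwarz inequality, and then exhibit an explicit maximiser to show the bound is tight.

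First I would apply \Cref{lem: kme} (which is permissible under \Cref{ass: trace class}) to rewrite the functional inside the supremum as an inner product in $\mathcal{H}(k)$:
\begin{align*}
\int f \,\mathrm{d}P - \int f \,\mathrm{d}Q = \langle f, \mu_P \rangle_{\mathcal{H}(k)} - \langle f, \mu_Q \rangle_{\mathcal{H}(k)} = \langle f, \mu_P - \mu_Q \rangle_{\mathcal{H}(k)}.
\end{align*}
Taking absolute values and applying Cauchy--Schwarz gives $|\langle f, \mu_P - \mu_Q \rangle_{\mathcal{H}(k)}| \leq \|f\|_{\mathcal{H}(k)} \|\mu_P - \mu_Q\|_{\mathcal{H}(k)}$, so the supremum over $\{f : \|f\|_{\mathcal{H}(k)} \leq 1\}$ is at most $\|\mu_P - \mu_Q\|_{\mathcal{H}(k)}$.

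For the reverse inequality, I would split into cases. If $\mu_P = \mu_Q$ the claim is trivial, since the supremum is then clearly $0$. Otherwise, set $f^\star = (\mu_P - \mu_Q)/\|\mu_P - \mu_Q\|_{\mathcal{H}(k)}$, which lies in $\mathcal{H}(k)$ (because $\mu_P, \mu_Q \in \mathcal{H}(k)$ by \Cref{lem: kme}) and has unit norm. Then
\begin{align*}
\left| \int f^\star \,\mathrm{d}P - \int f^\star \,\mathrm{d}Q \right| = \langle f^\star, \mu_P - \mu_Q \rangle_{\mathcal{H}(k)} = \|\mu_P - \mu_Q\|_{\mathcal{H}(k)},
\end{align*}
certifying that the supremum is achieved and equal to $\|\mu_P - \mu_Q\|_{\mathcal{H}(k)}$.

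There is no real obstacle here; the only thing one needs to be careful about is invoking \Cref{lem: kme} to justify (i) that both $\mu_P$ and $\mu_Q$ actually exist as elements of $\mathcal{H}(k)$, and (ii) that the integral--inner product exchange is valid, so that the candidate maximiser $f^\star$ is legitimately in $\mathcal{H}(k)$ and the supremum is interpreted correctly. Both are guaranteed by \Cref{ass: trace class}.
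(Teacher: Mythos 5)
Your proof is correct and takes essentially the same route as the paper: rewrite the error as an inner product with the Riesz representer $e = \mu_P - \mu_Q$ (justified via \Cref{lem: kme}), bound above by Cauchy--Schwarz, and attain equality with $f = e/\|e\|_{\mathcal{H}(k)}$ after handling the degenerate case $\|e\|_{\mathcal{H}(k)} = 0$ separately.
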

\begin{proof}
Since $e = \mu_P - \mu_Q$ is the Riesz representer of the intergal approximation error, we may apply Cauchy--Schwarz to obtain
$$
\left| \int f \mathrm{d}P - \int f \mathrm{d}Q \right| = | \langle f , e \rangle_{\mathcal{H}(k)} | \leq \| f \|_{\mathcal{H}(k)} \| e \|_{\mathcal{H}(k)} ,
$$
which shows that
$$
0 \leq D_k(P,Q) = \sup_{\|f\|_{\mathcal{H}(k)} \leq 1} \left| \int f \mathrm{d}P - \int f \mathrm{d}Q \right| \leq \| e \|_{\mathcal{H}(k)} .
$$
If $\|e\|_{\mathcal{H}(k)} = 0$ then the bound is necessarily an equality.
If not, consider $f = e / \|e\|_{\mathcal{H}(k)}$, which satisfies $\|f\|_{\mathcal{H}(k)} \leq 1$ and $\langle f , e \rangle_{\mathcal{H}(k)} = \|e\|_{\mathcal{H}(k)}$, showing that the bound is in fact an equality.
\end{proof}

\noindent This result is summarised visually in \Cref{fig: kme}.

If $D_k(P,Q_n) = 0$, the cubature rule based on $Q_n$ will be exact for all integrands $f \in \mathcal{H}(k)$.
Does this mean that $Q_n$ and $P$ are identical?

\begin{definition}[Characteristic kernel]
A kernel $k$ is said to be \emph{characteristic} if $D_k(P,Q) = 0$ implies $P = Q$.
\end{definition}

\begin{example}[Polynomial kernel is not characteristic]
From \Cref{eq: finite dim rkhs}, the kernel $k(x,y) = \sum_{i=1}^p x^i y^i$ reproduces an \ac{rkhs} whose elements are the polynomials of degree at most $p$ on the domain $\mathcal{X} = \mathbb{R}$.
Thus $D_k(P,Q) = 0$ if and only if the moments $\int x^i \mathrm{d}P(x)$ and $\int x^i \mathrm{d}Q(x)$ are identical for $i = 1,\dots,p$.
In particular, $k$ is \emph{not} a characteristic kernel.
\end{example}

\begin{example}
The Gaussian kernel $k(\bm{x},\bm{y}) = \exp(-\|\bm{x}-\bm{y}\|^2)$ is a characteristic kernel on $\mathcal{X} = \mathbb{R}^d$.
\end{example}

The characteristic property is desirable but, on its own, it does not provide strong justification for using $D_k(P,Q)$ to measure the discrepancy between $P$ and $Q$.
For this reason we now introduce a stronger property, called \textit{weak convergence control}.
Let $Q_n \Rightarrow P$ denote that the sequence $(Q_n)_{n=1}^\infty$ converges \textit{weakly} (or \textit{in distribution}) to $P$ (i.e. $\int f \mathrm{d}Q_n \rightarrow \int f \mathrm{d}P$ for all functions $f$ which are continuous and bounded).

\begin{definition}[Weak convergence control]
A kernel $k$ is said to have \emph{weak convergence control} if $D_k(P,Q_n) \rightarrow 0$ implies that $Q_n \Rightarrow P$.
\end{definition}

\begin{remark}
Perhaps surprisingly, for a compact Hausdorff space $\mathcal{X}$, a bounded\footnote{and measurable} characteristic kernel $k$ is guaranteed to have weak convergence control.
This equivalence no longer holds when the domain $\mathcal{X}$ is non-compact, and a bounded and characteristic kernel can fail to have weak convergence control; see \citet{simon2020metrizing}.
Clearly a kernel that is not characteristic fails to have weak convergence control.
\end{remark}

\noindent Convergence control justifies attempting to minimise \ac{mmd} for the purposes of quantisation and more general approximation, as we will attempt in the sequel.

\begin{example}
The Gaussian kernel $k(\bm{x},\bm{y}) = \exp(-\|\bm{x}-\bm{y}\|^2)$ controls weak convergence of probability distributions on $\mathcal{X} = [0,1]^d$.
It can also be shown that the Gaussian kernel controls weak convergence on $\mathcal{X} = \mathbb{R}^d$; this can be deduced from e.g. Theorem 7 of \citet{simon2020metrizing} and the general results in \citet{sriperumbudur2011universality}.
\end{example}

\subsection{Optimal Quantisation} \label{subsec: opt quant mmd}

The goal of quantisation is to find $Q$ of the form $Q_n = \frac{1}{n} \sum_{i=1}^n \delta(\bm{x}_i)$ such that $Q_n \approx P$ in some sense, and in these lectures that sense will be MMD.
To start to move toward practical algorithms, notice that \Cref{lem: mmd ad embed} provides a means to compute \ac{mmd}:
\begin{align*}
D_k(P,Q)^2 & = \|\mu_P - \mu_Q\|_{\mathcal{H}(k)}^2 \\
& = \langle \mu_P - \mu_Q , \mu_P - \mu_Q \rangle_{\mathcal{H}(k)} \\
& = \langle \mu_P , \mu_P \rangle_{\mathcal{H}(k)} - 2 \langle \mu_P , \mu_Q \rangle_{\mathcal{H}(k)} + \langle \mu_Q , \mu_Q \rangle_{\mathcal{H}(k)} 
\end{align*}
Now, considering for example the term $\langle \mu_P , \mu_Q \rangle_{\mathcal{H}(k)}$, we have
\begin{align*}
\langle \mu_P , \mu_Q \rangle_{\mathcal{H}(k)} & = \left\langle \int k(\cdot,\bm{x}) \mathrm{d}P(\bm{x}) , \int k(\cdot,\bm{y}) \mathrm{d}Q(\bm{y}) \right\rangle_{\mathcal{H}(k)} \\
& = \iint \langle k(\cdot,\bm{x}) , k(\cdot,\bm{y}) \rangle_{\mathcal{H}(k)} \mathrm{d}P(\bm{x}) \mathrm{d}Q(\bm{y}) \\
& = \iint k(\bm{x},\bm{y}) \mathrm{d}P(\bm{x}) \mathrm{d}Q(\bm{y}).
\end{align*}
Here we have used the reproducing property, as well as using \Cref{lem: kme} to justify the exchanges of integral and inner product.
Proceeding similarly with all three terms results in the expression
\begin{align*} 
D_k(P,Q)^2 & = \iint k(\bm{x},\bm{y}) \mathrm{d}P(\bm{x}) \mathrm{d}P(\bm{y}) - 2 \iint k(\bm{x},\bm{y}) \mathrm{d}P(\bm{x}) \mathrm{d}Q(\bm{y}) + \iint k(\bm{x},\bm{y}) \mathrm{d}Q(\bm{x}) \mathrm{d}Q(\bm{y}) .
\end{align*}

As a simple baseline method for quantisation we consider Monte Carlo (\Cref{fig: mc}):

\begin{figure}[t!]
\centering
\includegraphics[width = 0.49\textwidth,clip,trim = 4cm 10.5cm 4cm 9cm]{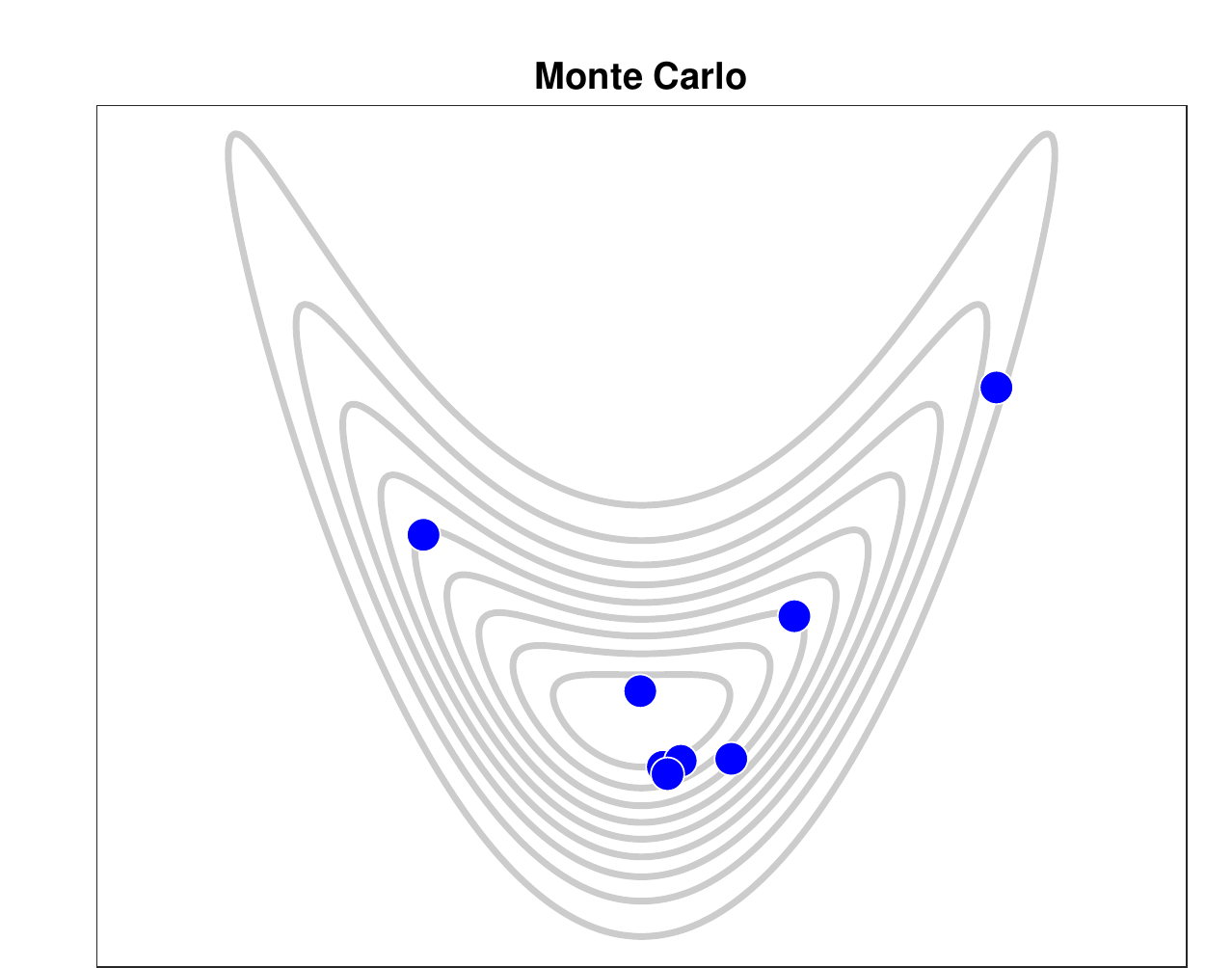}
\caption{\textit{Monte Carlo:} Independent samples (blue circles) from a ``horseshoe'' distribution $P$ (grey contours). }
\label{fig: mc}
\end{figure}

\begin{proposition}[MMD of Monte Carlo] \label{prop: MMD MC}
Let $\bm{x}_1,\dots,\bm{x}_n \sim P$ be independent.
Assume that $C := \int k(\bm{x},\bm{x}) \mathrm{d}P(\bm{x}) < \infty$.
Then
$$
\mathbb{E}\left[ D_k(P,Q_n)^2 \right] \leq \frac{C}{n} .
$$
\end{proposition}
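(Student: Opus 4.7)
My plan is to exploit the closed-form expansion of the squared MMD derived just before the proposition statement, applied with $Q = Q_n = \frac{1}{n}\sum_{i=1}^n \delta(\bm{x}_i)$, and then to take expectations termwise, using independence to kill the off-diagonal cross terms.

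First I would substitute $Q_n$ into the three-term expression for $D_k(P,Q_n)^2$. The first term, $\alpha := \iint k(\bm{x},\bm{y})\,\mathrm{d}P(\bm{x})\,\mathrm{d}P(\bm{y})$, is a deterministic constant (and is equal to $\|\mu_P\|_{\mathcal{H}(k)}^2$, hence nonnegative, a fact I will need at the end). The second term becomes $-\frac{2}{n}\sum_{i=1}^n \mu_P(\bm{x}_i)$, and the third becomes $\frac{1}{n^2}\sum_{i,j=1}^n k(\bm{x}_i,\bm{x}_j)$. Before taking expectations I would briefly note that the hypothesis $C = \int k(\bm{x},\bm{x})\,\mathrm{d}P(\bm{x}) < \infty$ combined with the kernel Cauchy--Schwarz bound $|k(\bm{x},\bm{y})| \le \sqrt{k(\bm{x},\bm{x})k(\bm{y},\bm{y})}$ makes all the integrands absolutely integrable, so Fubini applies and the exchanges are legitimate.

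Next I would take expectations. Since $\bm{x}_i \sim P$, we have $\mathbb{E}[\mu_P(\bm{x}_i)] = \int \mu_P(\bm{x})\,\mathrm{d}P(\bm{x}) = \alpha$, so the middle term contributes $-2\alpha$. For the double sum, I split into the $n$ diagonal terms, each with $\mathbb{E}[k(\bm{x}_i,\bm{x}_i)] = C$, and the $n(n-1)$ off-diagonal terms, each with $\mathbb{E}[k(\bm{x}_i,\bm{x}_j)] = \alpha$ by independence. This yields
\begin{equation*}
\mathbb{E}\!\left[\frac{1}{n^2}\sum_{i,j=1}^n k(\bm{x}_i,\bm{x}_j)\right] = \frac{C}{n} + \frac{n-1}{n}\alpha.
\end{equation*}
Adding the three contributions gives $\mathbb{E}[D_k(P,Q_n)^2] = \alpha - 2\alpha + \tfrac{C}{n} + \tfrac{n-1}{n}\alpha = \tfrac{C - \alpha}{n}$.

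The only remaining step is to bound this by $C/n$, which follows immediately from $\alpha = \|\mu_P\|_{\mathcal{H}(k)}^2 \ge 0$. There is no real obstacle here; the mildly delicate point is the Fubini justification for swapping expectations and integrals against $P$, but this is handled cleanly by the finite second moment hypothesis and the kernel Cauchy--Schwarz inequality. The argument is essentially the classical ``variance of a U-statistic'' computation, repackaged in RKHS language.
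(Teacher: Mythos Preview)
Your proof is correct and follows essentially the same route as the paper's: expand $D_k(P,Q_n)^2$ via the three-term formula, take expectations, split the double sum into diagonal and off-diagonal parts using independence, and discard the nonnegative term $\alpha = \|\mu_P\|_{\mathcal{H}(k)}^2$. Your write-up is in fact slightly more explicit than the paper's, since you record the exact identity $\mathbb{E}[D_k(P,Q_n)^2] = (C-\alpha)/n$ and justify the Fubini step via the kernel Cauchy--Schwarz bound, whereas the paper leaves these implicit.
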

\begin{proof}
From the above discussion, with $Q = Q_n = \frac{1}{n} \sum_{i=1}^n \delta(\bm{x}_i)$ we obtain that
\begin{align*}
D_k(P,Q_n)^2 = \iint k(\bm{x},\bm{y}) \mathrm{d}P(\bm{x}) \mathrm{d}P(\bm{y}) - \frac{2}{n} \sum_{i=1}^n \int k(\bm{x},\bm{x}_i) \mathrm{d}P(\bm{x}) + \frac{1}{n^2} \sum_{i=1}^n \sum_{j=1}^n k(\bm{x}_i,\bm{x}_j)
\end{align*}
Taking expectations of both sides gives
\begin{align*}
\mathbb{E}\left[ D_k(P,Q_n)^2 \right] & = \mathbb{E}\left[ \frac{1}{n^2} \sum_{i=1}^n \sum_{j=1}^n k(\bm{x}_i,\bm{x}_j) \right] -  \iint k(\bm{x},\bm{y}) \mathrm{d}P(\bm{x}) \mathrm{d}P(\bm{y}) \\
& = \mathbb{E}\left[ \frac{1}{n^2} \sum_{i=1}^n k(\bm{x}_i,\bm{x}_i) + \frac{1}{n^2} \sum_{i = 1}^n \sum_{j \neq i} k(\bm{x}_i,\bm{x}_j) \right] -  \iint k(\bm{x},\bm{y}) \mathrm{d}P(\bm{x}) \mathrm{d}P(\bm{y}) \\
& = \mathbb{E}\left[ \frac{1}{n^2} \sum_{i=1}^n k(\bm{x}_i,\bm{x}_i) \right] - \underbrace{ \frac{1}{n} \iint k(\bm{x},\bm{y}) \mathrm{d}P(\bm{x}) \mathrm{d}P(\bm{y}) }_{\geq 0} \\
& \leq \frac{1}{n} \int k(\bm{x},\bm{x}) \mathrm{d}P(\bm{x})
\end{align*}
since $\bm{x}_i \sim P$ are independent.
\end{proof}

Thus Monte Carlo sampling provides a consistent but potentially far from optimal quantisation of $P$.
Note that the convergence \textit{rate} in \Cref{prop: MMD MC} does not depend on the kernel $k$, which highlights the inefficiency of the Monte Carlo method in this context (e.g. compare against the later \Cref{thm: reweight}).
The goal of \emph{optimal} quantisation is to quantise $P$ using as few states $\bm{x}_i$ as possible (for a given approximation quality).
A conceptually simple approach to optimal quantisation is illustrated in \Cref{fig: own}, and you are encouraged to try this out:

\begin{exercise}[Optimal quantisation with MMD] \label{ex: Gaussian kernel cubature}
Consider $P = \mathcal{N}(\bm{0},\bm{I})$ and $k(\bm{x},\bm{y}) = \exp( - \|\bm{x} - \bm{y}\|^2 / \sigma^2)$ on $\mathcal{X} = \mathbb{R}^d$.
(You may wish to focus on $d=1$ or $d=2$.)
\begin{enumerate}[label=(\alph*)]
\item Calculate (analytically) the kernel mean embedding $\mu_P(\bm{x}) = \int k(\bm{x},\bm{y}) \mathrm{d}P(\bm{y})$.
\item For a fixed value of $n$ (e.g. $n=10$) and a fixed value of $\sigma$ (e.g. $\sigma = 1$), try to numerically optimise the locations of the states $\bm{x}_1,\dots,\bm{x}_n$ in order to minimise $D_k(P,\frac{1}{n} \sum_{i=1}^n \delta(\bm{x}_i) )$.
\item What effect does varying the bandwidth parameter $\sigma$ have on the approximations that are produced?
\end{enumerate}
\end{exercise}

\begin{figure}[t!]
\centering
\includegraphics[width = 0.49\textwidth,clip,trim = 4cm 10.5cm 4cm 9cm]{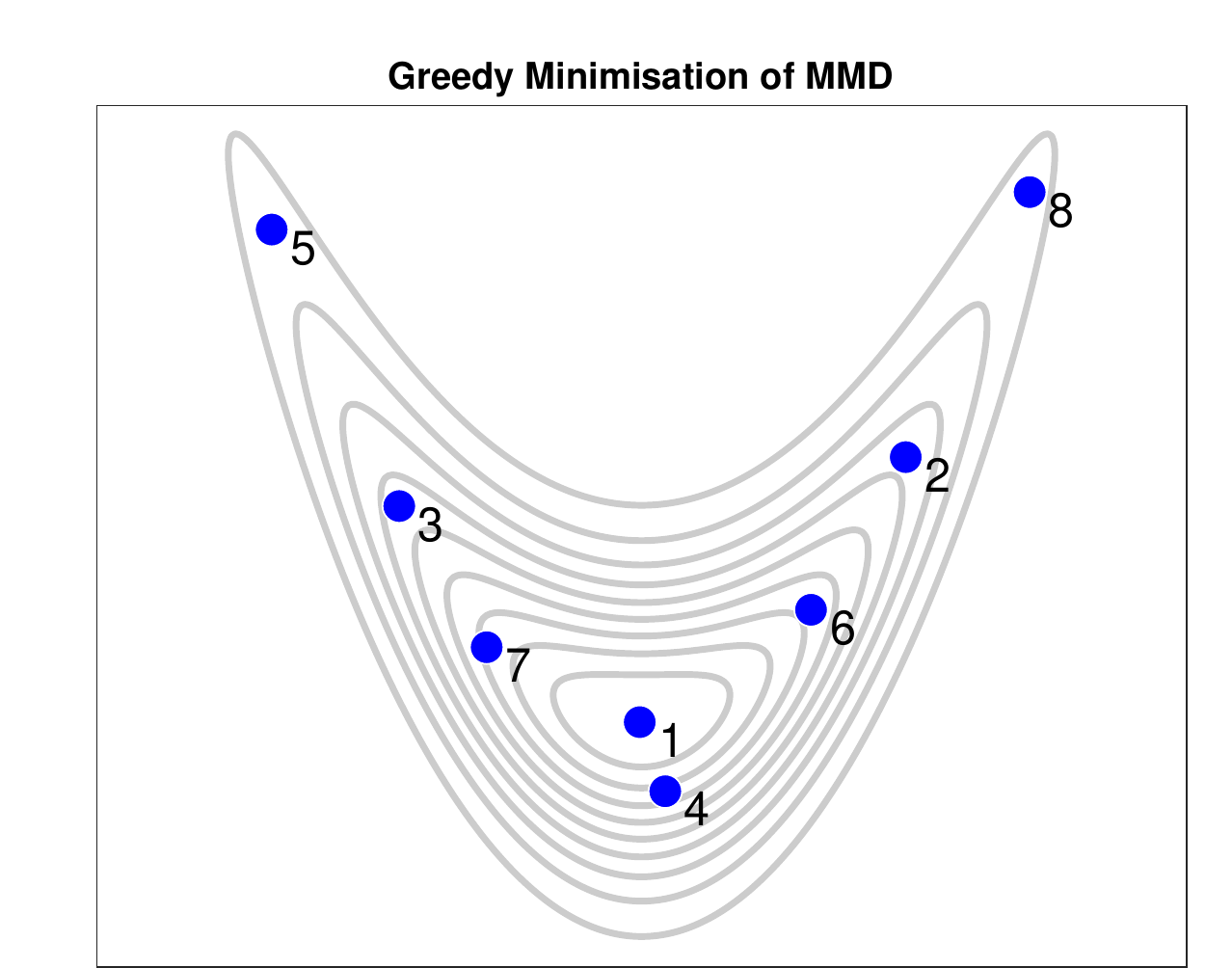}
\caption{\textit{Optimal (un-weighted) quantisation:} Sequential (greedy) minimisation of \ac{mmd} to select $n=8$ states $\{\bm{x}_i\}_{i=1}^n$ in an approximation $Q_n$ to $P$.  The numbers indicate the order in which the states $\bm{x}_i$ were selected.}
\label{fig: own}
\end{figure}

After performing \Cref{ex: Gaussian kernel cubature}, it should be clear that (1) \ac{mmd} provides a coherent framework for optimal quantisation, but (2) direct/naive numerical optimisation of \ac{mmd} may be impractical, or at least not straight forward.
A neat solution is proposed next.

\subsection{Optimal Approximation} 

Motivated by the difficulty of multivariate optimisation in $\mathcal{X} \times \dots \times \mathcal{X}$ in \Cref{ex: Gaussian kernel cubature}, in this section we return to the case of independently sampled $\bm{x}_i \sim P$ but now we allow for \textit{weighted} point sets; i.e. approximations of the form $Q_n = \sum_{i=1}^n w_i \delta(\bm{x}_i)$ for some weights $w_1,\dots,w_n \in \mathbb{R}$.

\begin{lemma} \label{lem: opt weights}
Let $\bm{x}_1,\dots,\bm{x}_n \in \mathcal{X}$ be distinct.
The optimal weights 
\begin{align*}
\argmin_{\bm{w} \in \mathbb{R}^n} D_k\left(P , \sum_{i=1}^n w_i \delta(\bm{x}_i) \right)
\end{align*}
are the solution of the linear system 
\begin{align}
\bm{K} \bm{w} & = \bm{z}  \label{eq: lin sys}
\end{align}
where $K_{ij} = k(\bm{x}_i,\bm{x}_j)$ and $z_i = \mu_P(\bm{x}_i)$.
\end{lemma}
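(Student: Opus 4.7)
The plan is to express the squared \ac{mmd} explicitly as a quadratic function of the weight vector $\bm{w}$, and then minimise by setting its gradient to zero. Concretely, I would start from the closed-form expression for $D_k(P,Q)^2$ derived earlier in \Cref{subsec: opt quant mmd}, and substitute $Q = \sum_{i=1}^n w_i \delta(\bm{x}_i)$ into each of the three double integrals.

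The first term $\iint k(\bm{x},\bm{y})\,\mathrm{d}P(\bm{x})\,\mathrm{d}P(\bm{y})$ is a constant $C$ independent of $\bm{w}$. The cross term collapses to $\iint k(\bm{x},\bm{y})\,\mathrm{d}P(\bm{x})\,\mathrm{d}Q(\bm{y}) = \sum_{i=1}^n w_i \int k(\bm{x},\bm{x}_i)\,\mathrm{d}P(\bm{x}) = \bm{w}^\top \bm{z}$ using the definition $z_i = \mu_P(\bm{x}_i)$. The quadratic term becomes $\iint k\,\mathrm{d}Q\,\mathrm{d}Q = \sum_{i,j} w_i w_j k(\bm{x}_i,\bm{x}_j) = \bm{w}^\top \bm{K} \bm{w}$. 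Assembling these,
$$
D_k\!\left(P, \textstyle\sum_{i=1}^n w_i \delta(\bm{x}_i)\right)^2 = \bm{w}^\top \bm{K} \bm{w} - 2 \bm{w}^\top \bm{z} + C,
$$
which is a strictly convex quadratic in $\bm{w}$ whenever $\bm{K}$ is positive definite.

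Next I would verify that $\bm{K}$ is indeed positive definite. Since $k$ is a symmetric positive definite kernel and the states $\bm{x}_1,\dots,\bm{x}_n$ are assumed distinct, the Gram matrix $\bm{K}$ is symmetric positive definite (any non-zero $\bm{v} \in \mathbb{R}^n$ corresponds to the non-zero element $\sum_i v_i k(\cdot,\bm{x}_i) \in \mathcal{S}(k)$, whose squared \ac{rkhs} norm is $\bm{v}^\top \bm{K} \bm{v} > 0$). Consequently the objective has a unique minimiser, determined by the first-order condition $\nabla_{\bm{w}}[\bm{w}^\top \bm{K} \bm{w} - 2 \bm{w}^\top \bm{z}] = 2\bm{K}\bm{w} - 2\bm{z} = \bm{0}$, which is precisely \eqref{eq: lin sys}.

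I do not foresee a substantial obstacle: the main subtlety is simply to point out why $\bm{K}$ is invertible (so that a unique minimiser exists and the linear system is well-posed), and to justify the identification $\sum_i w_i \mu_P(\bm{x}_i) = \bm{w}^\top \bm{z}$ via the kernel mean embedding from \Cref{lem: kme}. Once these are noted, the result follows from elementary calculus of a convex quadratic.
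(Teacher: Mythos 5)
Your proposal is correct and follows essentially the same route as the paper: expand $D_k(P,Q_n)^2 = C - 2\bm{z}^\top\bm{w} + \bm{w}^\top\bm{K}\bm{w}$, note positive definiteness of $\bm{K}$, and set the gradient to zero. The only addition is the short justification that distinctness of the $\bm{x}_i$ makes $\bm{K}$ positive definite, which the paper takes for granted.
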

\begin{proof}
The \ac{mmd} between $P$ and $Q_n = \sum_{i=1}^n w_i \delta(\bm{x}_i)$ can be expressed as
\begin{align*}
D_k(P,Q_n)^2 & = \iint k(\bm{x},\bm{y}) \mathrm{d}P(\bm{x}) \mathrm{d}P(\bm{y}) - 2 \iint k(\bm{x},\bm{y}) \mathrm{d}P(\bm{x}) \mathrm{d}Q(\bm{y}) + \iint k(\bm{x},\bm{y}) \mathrm{d}Q(\bm{x}) \mathrm{d}Q(\bm{y}) \\
& = C - 2 \bm{z}^\top \bm{w} + \bm{w}^\top \bm{K} \bm{w}
\end{align*}
where $C = \iint k(\bm{x},\bm{y}) \mathrm{d}P(\bm{x}) \mathrm{d}P(\bm{y})$ is independent of $\bm{w}$.
This is a non-degenerate quadratic form in $\bm{w}$ (since $\bm{K}$ is a positive definite matrix), from which the result is easily verified (e.g. differentiate w.r.t. $\bm{w}$ and solve for the unique critical point).
\end{proof}

\noindent See the illustration in \Cref{fig: omc}.

\begin{figure}[t!]
\centering
\includegraphics[width = 0.49\textwidth,clip,trim = 4cm 10.5cm 4cm 9cm]{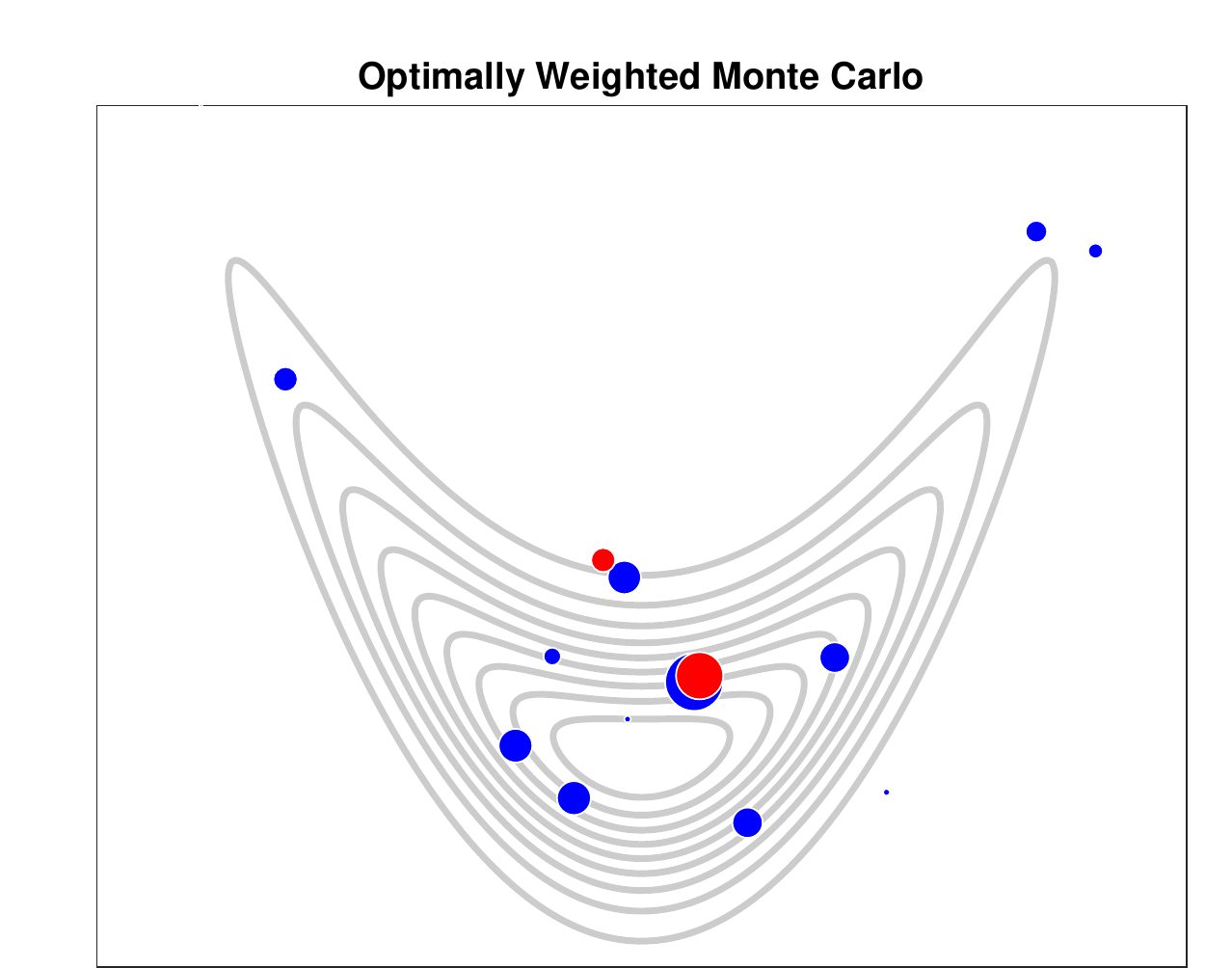}
\caption{\textit{Optimally weighted Monte Carlo samples:} The weights $w_1,\dots,w_n$ are obtained by minimising \ac{mmd} in the manner of \Cref{lem: opt weights}.
Blue indicates states $\bm{x}_i$ with positive weights $w_i>0$, while red indicates negative weights $w_i < 0$.
The size of the circles is proportional to $|w_i|$.  }
\label{fig: omc}
\end{figure}

\begin{remark}
The linear system in \eqref{eq: lin sys}, defining the optimal weights, can be numerically ill-conditioned (i.e. close to singular) when $n$ is large or when two of the $\bm{x}_i$ are close together (where the terms ``large'' and ``close'' depend on the specific kernel $k$ that is used).
(Of course, if $\bm{x}_i$ and $\bm{x}_j $ are identical we may assign $w_i = 0$ without loss of generality.)
Although we will not discuss this point further, there are several techniques for numerical regularisation could be used.
\end{remark}

\begin{remark}
In general the optimal weights can be negative and will not sum to $1$.
Both constraints can be enforced if required and, essentially, the results that we discuss continue to hold.
See e.g. Section 2.3 of \citet{karvonen2018bayes}.
\end{remark}

Our aim in the remainder of this section is to show that the optimal weights in \Cref{lem: opt weights} enable faster rates of convergence than the Monte Carlo rate in \Cref{prop: MMD MC}.
To make this concrete we consider a particularly natural class of \ac{rkhs}, introduced next.
The \textit{multi-index} notation
$$
\partial^{\bm{\alpha}}f : \bm{x} \mapsto \frac{\partial^{\bm{\alpha}}f(\bm{x})}{\partial \bm{x}^{\bm{\alpha}}} = \frac{\partial^{\|\bm{\alpha}\|_1}}{ \partial x_1^{\alpha_1} \dots \partial x_d^{\alpha_d}} f(\bm{x}), \qquad \bm{\alpha} \in \mathbb{N}_0^d
$$ 
will be used, and we let $|\bm{\alpha}| = \alpha_1 + \dots + \alpha_d$.

\begin{definition}
For $s > d/2$ and (sufficiently regular) $\mathcal{X} \subset \mathbb{R}^d$, the (order $s$) Sobolev space $H^s(\mathcal{X})$ is defined to be the set of functions $f : \mathcal{X} \rightarrow \mathbb{R}$ whose mixed partial derivatives $\partial^{\bm{\alpha}} f$, $|\bm{\alpha}| \leq s$, exist in $L^2(\mathcal{X})$.
This becomes a Hilbert space with inner product
$$
\langle f , g \rangle_{H^s(\mathcal{X})} = \sum_{|\bm{\alpha}| \leq s} \int \frac{\partial^{\bm{\alpha}}f(\bm{x})}{\partial \bm{x}^{\bm{\alpha}}}  \frac{\partial^{\bm{\alpha}}g(\bm{x})}{\partial \bm{x}^{\bm{\alpha}}}  \mathrm{d}\bm{x}  .
$$
A kernel $k : \mathcal{X} \times \mathcal{X} \rightarrow \mathbb{R}$ is called a \emph{Sobolev kernel} if there exists $0 < c_1 < c_2 < \infty$ such that, for all $f \in \mathcal{H}(k)$, we have $c_1 \|f\|_{H^s(\mathcal{X})} \leq \|f\|_{\mathcal{H}(k)} \leq c_2 \|f\|_{H^s(\mathcal{X})}$.
\end{definition}

\begin{example} \label{ex: Sobolev kernels}
Let $z_+^m$ denote $\max(0,z)^m$ in shorthand.
Then examples of Sobolev kernels on (sufficiently regular) $\mathcal{X} \subseteq \mathbb{R}$ include the following, due to \citet{wendland1998error}:
\begin{center}
\begin{tabular}{|l|l|} \hline
$k(x,y)$ \; \; {\normalfont ($r = |x-y|$, $x,y \in \mathbb{R}$)} & {\normalfont order} \\ \hline
$(1 - r )_+$ & $s=1$ \\
$(1 - r)_+^3 (3r+1)$  & $s=2$  \\
$(1 - r)_+^5 (8r^2 + 5r + 1)$ &  $s=3$ \\ \hline
\end{tabular}
\end{center}
These kernels are convenient for numerical reasons, due to their compact support, which renders $\bm{K}$ a \emph{sparse} matrix.
\end{example}

The above inner product should be contrasted with that in \Cref{lem: Hilbert space}. Here we are considering mixed partial derivatives of order at most $s$, where each coordinate can in principle be differentiated more than once, whereas in \Cref{lem: Hilbert space} we consider mixed partial derivatives of up to order $d$, provided that each coordinate of $\bm{x}$ is differentiated at most once.

%\begin{exercise}[Understanding the smoothness assumptions]
%Let $d \geq 2$ and $s > 0$.
%Can you determine a function $f : [0,1]^d \rightarrow \mathbb{R}$ for which $\|f\|_{H^s([0,1]^d)} < \infty$ but $\|f\|_1 = \infty$, and vice versa?
%(Recall that $\|f\|_1$ is the \emph{variation}, defined in \Cref{def: var}.)
%\end{exercise}

Given the slow convergence of \ac{mmd} with uniformly-weighted random samples in \Cref{prop: MMD MC}, the following is possibly very surprising:

\begin{theorem} \label{thm: reweight}
Let $\bm{x}_1,\dots,\bm{x}_n \sim P$ be independent and let $\bm{w} = \bm{w}(\bm{x}_1,\dots,\bm{x}_n)$ denote optimal weights in the sense of \Cref{lem: opt weights}.
Let $k(\bm{x},\bm{y})$ be an (order $s$) Sobolev kernel.
Then, under regularity conditions on the domain $\mathcal{X}$, which is of dimension $d$, and the distribution $P$, there exists a constant $0 < C < \infty$ such that
$$
\mathbb{E}\left[ D_k\left(P , \sum_{i=1}^n w_i \delta(\bm{x}_i) \right) \right] \leq \left(\frac{C \log(n)}{n}\right)^{s/d} .
$$
\end{theorem}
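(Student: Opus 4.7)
The plan is to interpret the optimally-weighted \ac{mmd} as a \emph{scattered-data approximation} error, for which sharp estimates in Sobolev spaces are classical. By \Cref{lem: opt weights}, the optimally-weighted embedding $\mu_{Q_n}$ coincides with the orthogonal projection $\Pi_n \mu_P$ of $\mu_P$ onto $V_n := \mathrm{span}\{k(\cdot,\bm{x}_i)\}_{i=1}^n$; consequently, as a short calculation using $\bm{K}\bm{w} = \bm{z}$ shows, the cubature rule $\sum_i w_i g(\bm{x}_i) = \int g\, \mathrm{d}P$ is exact on all $g \in V_n$. Applied to the \ac{rkhs}-interpolant $\hat f_n \in V_n$ of $f$ at the nodes $\{\bm{x}_i\}$ (which satisfies $\hat f_n(\bm{x}_i) = f(\bm{x}_i)$ by the reproducing property applied to $f - \hat f_n \perp V_n$), this yields the key representation
$$
\int f\, \mathrm{d}P - \sum_{i=1}^n w_i f(\bm{x}_i) \;=\; \int (f - \hat f_n)\, \mathrm{d}P,
$$
valid for every $f \in \mathcal{H}(k)$.

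The task thus reduces to controlling $f - \hat f_n$ in a norm compatible with integration against $P$. Here I would invoke the classical $L^2$ estimate from scattered-data approximation in Sobolev spaces: for $\mathcal{X} \subset \mathbb{R}^d$ a bounded Lipschitz domain and $k$ an order-$s$ Sobolev kernel,
$$
\|f - \hat f_n\|_{L^2(\mathcal{X})} \;\leq\; C_1\, h_n^{s}\, \|f\|_{\mathcal{H}(k)},
$$
where $h_n := \sup_{\bm{x}\in\mathcal{X}} \min_i \|\bm{x} - \bm{x}_i\|$ is the \emph{fill distance} of the node set. Provided the regularity conditions on $P$ include that it admits a density bounded above on $\mathcal{X}$, Cauchy--Schwarz gives $|\int (f - \hat f_n)\, \mathrm{d}P| \leq C'\|f - \hat f_n\|_{L^2(\mathcal{X})}$, and taking the supremum over the unit ball of $\mathcal{H}(k)$ yields the pathwise bound $D_k(P,Q_n) \leq C_2\, h_n^s$.

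It remains to bound $\mathbb{E}[h_n^s]$ when $\bm{x}_1,\dots,\bm{x}_n$ are drawn independently from $P$. Assuming $P$ also has a density bounded below on $\mathcal{X}$, a standard covering argument suffices: cover $\mathcal{X}$ by $N(r) = O(r^{-d})$ balls of radius $r$; the event $\{h_n > 2r\}$ forces some ball to contain no sample, so a union bound gives $\Pr(h_n > 2r) \leq C r^{-d} \exp(-c n r^d)$. Choosing $r \asymp (\log n / n)^{1/d}$ with a sufficiently large prefactor makes the tail decay polynomially in $n$, and integrating the resulting tail (using that $h_n$ is deterministically bounded by $\mathrm{diam}(\mathcal{X})$) yields $\mathbb{E}[h_n^s] \leq (C \log n / n)^{s/d}$. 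Combined with $D_k(P,Q_n) \leq C_2 h_n^s$ this delivers the theorem.

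The principal technical obstacle is the Sobolev $L^2$-interpolation estimate above: it is the ``doubled'' version of the more elementary pointwise rate $h_n^{s-d/2}$ (which would give only $(\log n / n)^{(s-d/2)/d}$ and thereby miss the target by exactly $d/2$), and its proof requires Sobolev extension and a duality/Aubin--Nitsche-style argument, together with care near the boundary of $\mathcal{X}$. A secondary subtlety is making the ``regularity conditions'' precise: a Lipschitz boundary for $\mathcal{X}$ underwrites both the extension theorem and the covering-number bound, while two-sided boundedness of the density of $P$ is what drives the $L^2$-to-$L^1(P)$ passage in the second step and the exponential covering bound in the third.
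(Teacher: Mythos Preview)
Your proposal is correct and follows the same three-step skeleton as the paper's sketch: (i) recognise that the optimally-weighted rule integrates the kernel interpolant $\hat f_n$ exactly, so the cubature error is $\int (f-\hat f_n)\,\mathrm{d}P$; (ii) bound the interpolation error in terms of the fill distance $h_n$ via scattered-data approximation in Sobolev spaces; (iii) control the expectation of the relevant power of $h_n$ for i.i.d.\ samples from a density bounded away from zero (the paper cites \citet{reznikov2016covering} for this, while you give the covering/union-bound argument directly).

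The one substantive difference is in step (ii). The paper's sketch invokes the $L^\infty$ sampling inequality (Theorem 11.13 of \citet{wendland2004scattered}) and writes the resulting exponent as $h_n^{s/d}$; you instead use the $L^2$ estimate $\|f-\hat f_n\|_{L^2(\mathcal{X})}\lesssim h_n^{s}\|f\|_{\mathcal{H}(k)}$ and pass to $L^1(P)$ by Cauchy--Schwarz under a bounded-density assumption. Your route is the one that actually delivers the stated rate: as you correctly point out, the standard sup-norm bound gives only $h_n^{s-d/2}$, which together with $h_n\asymp(\log n/n)^{1/d}$ yields $(\log n/n)^{(s-d/2)/d}$ and misses the target by $d/2$ in the exponent. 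The exponent $s/d$ in the paper's sketch appears to be a slip; your identification of the Aubin--Nitsche/duality-type $L^2$ estimate as the ``principal technical obstacle'' is exactly the ingredient needed to close that gap, and your list of the accompanying regularity conditions (Lipschitz domain for extension, two-sided density bounds) is appropriate.
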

\begin{proof}
The following is a sketch:
For simplicity assume that $\mathcal{H}(k)$ and $H^s(\mathcal{X})$ have an identical inner product.
Under regularity conditions on the domain $\mathcal{X}$, which amounts to $\mathcal{X}$ being bounded and satisfying an \emph{interior cone condition}, one can obtain a \emph{sampling inequality} of the form
$$
\|f - f_n\|_\infty \leq C h_n^{s/d} \|f\|_{H^s(\mathcal{X})}
$$
where $f_n(\bm{x}) = \sum_{i=1}^n c_i k(\bm{x},\bm{x}_i)$, $\bm{c} = \bm{K}^{-1} \bm{f}$, is 
an interpolant of the function $f$ at the locations $\bm{x}_1,\dots,\bm{x}_n$, and $h_n$ is the fill distance
$$
h_n = \sup_{\bm{x} \in \mathcal{X}} \; \min_{i = 1,\dots,n} \|\bm{x} - \bm{x}_i \| ,
$$
see e.g. Theorem 11.13 of \citet{wendland2004scattered}.
Then observe that $\int f_n \mathrm{d}P = \sum_{i=1}^n c_i z_i = \bm{f}^\top \bm{K}^{-1} \bm{z} = \bm{f}^\top \bm{w} = \sum_{i=1}^n w_i f(\bm{x}_i)$.
Thus $\left| \sum_{i=1}^n w_i f(\bm{x}_i) - \int f \mathrm{d}P \right| = \left| \int f_n \mathrm{d}P - \int f \mathrm{d}P \right| \leq \|f_n - f\|_\infty \leq C h_n^{s/d} \|f\|_{H^s(\mathcal{X})}$.
From the definition of \ac{mmd} it follows that $D_k(P,\sum_{i=1}^n w_i f(\bm{x}_i)) \leq C h_n^{s/d}$.
Under regularity conditions on $P$, which amounts to $P$ admitting a \ac{pdf} bounded away from 0 and $\infty$ on $\mathcal{X}$, one can show that $\mathbb{E}[h_n^{s/d}]$ decreases at the advertised $\mathcal{O}(((\log n) / n)^{s/d})$ rate; see \citet{reznikov2016covering}.
\end{proof}

\noindent Thus for $s = d/2$ we recover the same rate as \Cref{prop: MMD MC} for un-weighted Monte Carlo (up to log factors), while for $s > d/2$ we obtain faster convergence in \ac{mmd}. 

\begin{remark}
Stronger concentration inequalities than \Cref{thm: reweight} can also be established; see e.g. \citet{ehler2019optimal}.
\end{remark}

So surely it is a good idea to employ optimal weights?
Not necessarily - the computational cost is $\mathcal{O}(n^3)$ in general and numerical ill-conditioning requires careful treatment.
In \ac{qmc} an important goal is to find a deterministic sequence of sets of (un-weighted) states whose computation is $\mathcal{O}(n)$, such that \ac{mmd} is asymptotically minimised.
Thus, at least for the simple forms of $P$ for which a \ac{qmc} method has been discovered, the \ac{qmc} approach would usually be preferred.
The \ac{rkhs}/\ac{mmd} perspective on \ac{qmc} was popularised by \citet{hickernell1998generalized}.

\begin{exercise}[Rates of convergence and MMD] \label{exer: rates mmd}
Consider the uniform distribution $P = \mathcal{U}([0,1])$ together with the Sobolev kernels $k$ of orders $s \in \{1,2,3\}$ in \Cref{ex: Sobolev kernels}.
\begin{enumerate}[label=(\alph*)]
\item Calculate the kernel mean embeddings $\mu_P(x) = \int_0^1 k(x,y) \mathrm{d}y$.
\item Generate and store a sequence $(x_i)_{i=1}^{100}$ of independent samples from $P$.
\item For each $n = 1,\dots,100$ and each $s = 1,2,3$, calculate and plot the values of
$$
D_k\left(P, \frac{1}{n} \sum_{i=1}^n \delta(x_i) \right)
\qquad 
\text{and}
\qquad
D_k\left(P, \sum_{i=1}^n w_i^{(n)} \delta(x_i) \right)
$$
where $\bm{w}^{(m)} = (w_1^{(m)},\dots,w_n^{(m)})$ are the optimal weights obtained by solving the $m$-dimensional linear system in \Cref{lem: opt weights}, based on the states $\{x_i\}_{i=1}^m$.
\item What rates of convergence would you expect to observe for these quantities as $n \rightarrow \infty$, and do your experiments agree with these rates of convergence?
%\item How sensitive are the results to the choice of random seed?
%\item What happens if, instead, we consider $n = 1,\dots,1000$?
\end{enumerate}
\end{exercise}

\paragraph*{Chapter Notes}

The presentation of \Cref{lem: kme} follows Lemma 3.1 of  \citet{muandet2016kernel}.
\Cref{lem: kme} presented an elementary argument for why kernel mean embeddings are well-defined, but a more general framework is the Bochner integral.
Bochner's criterion for integrability states that a Bochner-measurable function $F : [0,1]^d \rightarrow \mathcal{H}(k)$ is Bochner integrable if and only if $\int \|F(\bm{x})\|_{\mathcal{H}(k)} \mathrm{d}P(\bm{x}) < \infty$.
Here we take $F(\bm{x}) = k(\cdot,\bm{x})$, noting that $\|k(\cdot,\bm{x})\|_{\mathcal{H}(k)} = \sqrt{k(\bm{x},\bm{x})}$, which recovers the condition in \Cref{lem: kme}.
There is an elegant criterion to determine when a translation-invariant kernel $k$ (i.e. $k(\bm{x},\bm{y}) = \phi(\bm{x} - \bm{y})$ for some function $\phi$) is characteristic; see Section 2.1 of \citet{muandet2016kernel}.
Stronger concentration inequalities than \Cref{prop: MMD MC} for Monte Carlo MMD have been established; see Section 3.3 of \citet{muandet2016kernel}.
The case of \Cref{thm: reweight} where $\mathcal{X}$ is a smooth, connected, closed Riemannian manifold of dimension $d$ is presented in \citet{ehler2019optimal}.
(This requires some generalisation of the definition of a Sobolev kernel.)
The fastest known rates for explicit constructions of weighted approximations in the case of \Cref{ex: Gaussian kernel cubature} are (at the time of writing) due to \citet{karvonen2021integration}.
Greedy optimisation can provide a practical solution to optimal quantisation problems like \Cref{ex: Gaussian kernel cubature} \citep{pronzato2020bayesian,teymur2021optimal}, as can the sophisticated numerical methods for high-dimensional optimisation used to produce \Cref{fig: Graf} \citep{graf2012quadrature}.
Several tricks are available to reduce the cost of solving the linear system in \eqref{eq: lin sys}; see e.g. \citet{karvonen2018fully}.

\section{Stein Discrepancy}

In this final section we aim to perform optimal quantisation of a distribution $P$ that admits a \ac{pdf} $p(\bm{x})$ on $\bm{x} \in \mathbb{R}^d$, such that
$$
p(\bm{x}) = \frac{\tilde{p}(\bm{x})}{Z} ,
$$
where $\tilde{p}$ can be exactly evaluated but $Z$, and hence $p(\bm{x})$, cannot easily be evaluated or even approximated.
This setting is typical in applications of Bayesian inference, where we have
$$
p(\bm{x}) = \frac{\pi(\bm{x}) \mathcal{L}(\bm{x})}{Z}
$$
where $\pi(\bm{x})$ is a \emph{prior} \ac{pdf}, $\mathcal{L}(\bm{x})$ is a likelihood, and the implicitly defined normalisation constant $Z$ is the \emph{marginal likelihood}.
The integral
$$
Z = \int \pi(\bm{x}) \mathcal{L}(\bm{x}) \mathrm{d}\bm{x}
$$
is often extremely challenging to evaluate due to localised regions in which $\mathcal{L}$ takes very large values.
Several methods have been developed in the statistics, applied probability, physics and machine learning literatures to approximate distributions $P$ with these characteristics, including \textit{\ac{mcmc}}, \textit{\ac{smc}}, and \textit{variational inference}.
These techniques do not typically attempt \textit{optimal} quantisation, since even the basic quantisation task can be difficult.

The aim of this section is to discuss whether the techniques described in \Cref{sec: mmd} can be applied in this more challenging context. 
The apparent difficulty is that we cannot compute integrals with respect to $P$, such as $\int k(\cdot,\bm{x}) \mathrm{d}P(\bm{x})$, which are required for computation of \ac{mmd}.
A hint at a possible solution is provided by the following result:

\begin{lemma} \label{lem: Stein discrep}
Suppose $k_P : \mathcal{X} \times \mathcal{X} \rightarrow \mathbb{R}$ is a symmetric positive definite kernel with $\int k_P(\cdot,\bm{x}) \mathrm{d}P = 0$ for all $\bm{x} \in \mathcal{X}$.
Then
$$
D_{k_P}(Q) = D_{k_P}(P,Q) = \sup_{\|f\|_{\mathcal{H}(k_P)} \leq 1} \left| \int f \mathrm{d}Q \right| .
$$
\end{lemma}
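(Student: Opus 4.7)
The plan is to reduce the claim to the observation that the kernel mean embedding of $P$ in $\mathcal{H}(k_P)$ is the zero element, after which both equalities follow directly from the definition of \ac{mmd} and Lemma~\ref{lem: kme}. The quantity $D_{k_P}(Q)$ on the left is simply notation — the shorthand $D_{k_P}(Q) = D_{k_P}(P,Q)$ reflects the fact that, once the hypothesis is imposed, $P$ no longer plays any active role and the discrepancy is a function of $Q$ alone. So the substantive content is the second equality.

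First I would unpack the hypothesis. Reading $\int k_P(\cdot,\bm{x}) \mathrm{d}P = 0$ with the dot as the integration variable, the assumption says $\int k_P(\bm{y},\bm{x}) \mathrm{d}P(\bm{y}) = 0$ for every fixed $\bm{x} \in \mathcal{X}$. By symmetry of $k_P$, this is identical to $\int k_P(\bm{x},\bm{y}) \mathrm{d}P(\bm{y}) = 0$ for every $\bm{x}$, which is exactly the statement that $\mu_P(\bm{x}) = 0$ for all $\bm{x}$. Hence $\mu_P$ is the zero element of $\mathcal{H}(k_P)$.

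Next I would combine this with Lemma~\ref{lem: kme}, which gives $\int f \mathrm{d}P = \langle f , \mu_P \rangle_{\mathcal{H}(k_P)} = 0$ for every $f \in \mathcal{H}(k_P)$. Substituting into the definition of \ac{mmd},
\begin{align*}
D_{k_P}(P,Q) = \sup_{\|f\|_{\mathcal{H}(k_P)} \leq 1} \left| \int f \mathrm{d}P - \int f \mathrm{d}Q \right| = \sup_{\|f\|_{\mathcal{H}(k_P)} \leq 1} \left| \int f \mathrm{d}Q \right|,
\end{align*}
which is the required identity. Equivalently, one could invoke Lemma~\ref{lem: mmd ad embed} to write $D_{k_P}(P,Q) = \|\mu_P - \mu_Q\|_{\mathcal{H}(k_P)} = \|\mu_Q\|_{\mathcal{H}(k_P)}$ and then use the Riesz argument from the proof of that lemma to re-express this as the supremum.

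There is no real obstacle — the lemma is essentially a restatement of the hypothesis in the language of \ac{mmd}. The only point requiring minor care is correctly parsing the integration variable in the assumption and then using symmetry of $k_P$ to translate it into a statement about $\mu_P$; after that, the machinery already developed in \Cref{sec: mmd} delivers the conclusion in one line.
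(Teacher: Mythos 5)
Your proof is correct and takes essentially the same route as the paper: both reduce the claim to the observation that $\mu_P = 0$ in $\mathcal{H}(k_P)$, invoke \Cref{lem: kme} (with the interchange of integral and inner product) to conclude $\int f \,\mathrm{d}P = 0$ for all $f \in \mathcal{H}(k_P)$, and then substitute into the definition of \ac{mmd}. Your extra care in parsing the integration variable and using the symmetry of $k_P$ is a reasonable clarification of a point the paper leaves implicit, but it does not change the argument.
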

\begin{proof}
For all $f \in \mathcal{H}(k_P)$ it holds that $\int f \mathrm{d}P = 0$, whence the result.
Indeed, from the reproducing property, and using \Cref{lem: kme} with \Cref{ass: trace class} to interchange integral with inner product, $\int f \mathrm{d}P = \int \langle f , k(\cdot,\bm{x}) \rangle_{\mathcal{H}(k_P)} \mathrm{d}P(\bm{x}) = \left\langle f , \int k(\cdot,\bm{x}) \mathrm{d}P(\bm{x}) \right\rangle_{\mathcal{H}(k_P)} = \langle f , 0 \rangle_{\mathcal{H}(k_P)} = 0$.
\end{proof}

\noindent The important point here is that $D_{k_P}(Q)$ does not require integrals with respect to $P$ to be computed.
A kernel $k_P$ with $\int k_P(\cdot,\bm{x}) \mathrm{d}P = 0$ will be called a \emph{Stein kernel} (for $P$), for reasons that will become clear in the sequel.
%\begin{lemma} \label{lem: LLk}
%Let $k : \mathcal{X} \times \mathcal{X} \rightarrow \mathbb{R}$ be a symmetric positive definite kernel and let $L : \mathcal{H}(k)^m \rightarrow \mathcal{H}(k)$ be a bounded linear operator, where $\mathcal{H}(k)^m$ is a Cartesian product of $m$ copies of $\mathcal{H}(k)$.
%For a bivariate function $b(\bm{x},\bm{y})$, let $L^{\bm{x}}b(\bm{x},\bm{y})$ denote the action of $L$ on the first argument of $b(\bm{x},\bm{y})$, and likewise let $L^{\bm{y}}b(\bm{x},\bm{y})$ denote the action of $L$ on the second argument.
%Then $L^{\bm{x}} L^{\bm{y}} k(\bm{x},\bm{y})$ is also a symmetric positive definite kernel and $\mathcal{H}(k)$ contains elements of the form $Lf$, $f \in \mathcal{H}(k)$.
%\end{lemma}
%\begin{proof}
%\textcolor{red}{[TODO]}
%\begin{align*}
%L^{\bm{x}} L^{\bm{y}} k(\bm{x},\bm{y}) & = L^{\bm{x}} L^{\bm{y}} \langle k(\cdot,\bm{x}) , k(\cdot,\bm{y}) \rangle_{\mathcal{H}(k)} \\
%& = \langle L^{\bm{x}} k(\cdot,\bm{x}) , L^{\bm{y}} k(\cdot,\bm{y}) \rangle_{\mathcal{H}(k)} .
%\end{align*}
%\end{proof}
An example for how such a kernel can be constructed is as follows:
Consider the bounded linear operator $(\mathcal{A}_P g)(\bm{x}) = g(\bm{x}) - \int g \mathrm{d}P$ acting on elements of an \ac{rkhs} $\mathcal{H}(k)$.
If we apply $\mathcal{A}_P$ to both arguments of the kernel $k$, we obtain a Stein kernel
\begin{align}
k_P(\bm{x},\bm{y}) & = \mathcal{A}_P^{\bm{y}} \mathcal{A}_P^{\bm{x}} k(\bm{x},\bm{y}) \nonumber \\
& = k(\bm{x},\bm{y}) - \int k(\bm{x},\bm{y}) \mathrm{d}P(\bm{x}) - \int k(\bm{x},\bm{y}) \mathrm{d}P(\bm{y}) + \iint k(\bm{x},\bm{y}) \mathrm{d}P(\bm{x}) \mathrm{d}P(\bm{y}). \label{eq: zero mean kernel}
\end{align}
Indeed, $\int k_P(\cdot,\bm{x}) \mathrm{d}P(\bm{x}) = \int \mathcal{A}_P^{\bm{y}} \mathcal{A}_P^{\bm{x}} k(\bm{x},\bm{y}) \mathrm{d}P(\bm{x}) = \mathcal{A}_P^{\bm{y}} \int \mathcal{A}_P^{\bm{x}} k(\bm{x},\bm{y}) \mathrm{d}P(\bm{x}) = \mathcal{A}_P^{\bm{y}} 0 = 0$, where interchange of $\mathcal{A}_P^{\bm{y}}$ and the integral is justified by noting that $\mathcal{A}_P^{\bm{y}}$ is a bounded linear operator and following similar reasoning to \Cref{lem: kme}.
In fact, the \ac{rkhs} $\mathcal{H}(k_P)$ consists of functions of the form $\mathcal{A}_P g = g - \int g \mathrm{d}P$ where $g \in \mathcal{H}(k)$.
Unfortunately, the Stein kernel in \eqref{eq: zero mean kernel} is not useful because it still involves the problematic integral $\int k(\cdot,\bm{x}) \mathrm{d}P(\bm{x})$.
The next section presents a more useful construction of a Stein kernel.

\subsection{Stein Operators}

The aim here is to identify an alternative operator $\mathcal{A}_P$, which \textit{can} be computed.
Let $\nabla f = (\partial_{x_1}f,\dots,\partial_{x_d}f)^\top$ for differentiable functions $f : \mathbb{R}^d \rightarrow \mathbb{R}$.
Our main tool is a \emph{Stein operator}, and the classical example of this is as follows:

\begin{assumption} \label{ass: lipz}
The distribution $P$ admits a positive and differentiable \ac{pdf} such that $\bm{x} \mapsto (\nabla \log p)(\bm{x})$ is Lipschitz.
\end{assumption}

\begin{definition}[Canonical Stein operator] \label{def: can so}
For a distribution $P$ admitting a positive and differentiable density $p$ on $\mathbb{R}^d$, we define the \emph{canonical Stein operator}
$$
(\mathcal{A}_P g)(\bm{x}) = (\nabla \cdot g)(\bm{x}) + g(\bm{x}) \cdot (\nabla \log p)(\bm{x})
$$
acting on differentiable vector field $g : \mathbb{R}^d \rightarrow \mathbb{R}^d$, where $\bm{x} \in \mathbb{R}^d$.
\end{definition}

\noindent The canonical Stein operator was introduced (for Gaussian $P$) in \citet{stein1972bound}.
Importantly, observe that
$$
(\nabla \log p)(\bm{x}) = \frac{(\nabla p)(\bm{x})}{p(\bm{x})} = \frac{\frac{1}{Z} (\nabla \tilde{p})(\bm{x})}{\frac{1}{Z} \tilde{p}(\bm{x})} = \frac{(\nabla \tilde{p})(\bm{x})}{\tilde{p}(\bm{x})} = (\nabla \log \tilde{p})(\bm{x}) ,
$$
which can be computed without knowledge of $p$ or $Z$, provided $\tilde{p}$ and $\nabla \tilde{p}$ can be evaluated.
Loosely speaking, we can apply the Stein operator $\mathcal{A}_P$ in \Cref{def: can so} to a standard kernel $k$ to obtain the following Stein kernel:

\begin{lemma} \label{lem: Stein kernel}
Suppose that $k : \mathbb{R}^d \times \mathbb{R}^d \rightarrow \mathbb{R}$ is a symmetric positive definite kernel with 
$(\bm{x},\bm{y}) \mapsto \partial^{(\bm{\alpha},\bm{\beta})} k(\bm{x},\bm{y})$ being continuous and uniformly bounded for all $|\bm{\alpha}|, |\bm{\beta}| \leq 1$.
Suppose $\int \|\nabla \log p(\bm{x})\| \mathrm{d}P(\bm{x}) < \infty$ and that $\sup_{\|\bm{x}\| \geq r} r^{d-1} p(\bm{x}) \rightarrow 0$ as $r \rightarrow \infty$.
Then
\begin{align*}
k_P(\bm{x},\bm{y}) & = \nabla_{\bm{x}} \cdot \nabla_{\bm{y}} k(\bm{x},\bm{y}) + \nabla_{\bm{x}} \log p(\bm{x}) \cdot \nabla_{\bm{y}} k(\bm{x},\bm{y}) + \nabla_{\bm{y}} \log p(\bm{y}) \cdot \nabla_{\bm{x}} k(\bm{x},\bm{y}) \\
& \qquad + (\nabla_{\bm{x}} \log p(\bm{x})) \cdot (\nabla_{\bm{y}} \log p(\bm{y})) k(\bm{x},\bm{y})
\end{align*}
is a symmetric positive definite kernel with $\int k_P(\bm{x},\bm{y}) \mathrm{d}P(\bm{y}) = 0$ for all $\bm{x} \in \mathbb{R}^d$.
\end{lemma}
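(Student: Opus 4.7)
The plan is to write $k_P$ as the ``two-sided Stein transform'' of $k$ and then read off all three required properties from that structure. Concretely, introduce the first-order scalar operators $T_l^{\bm{x}} := \partial_{x_l} + s_l(\bm{x})$, where $s(\bm{x}) = \nabla \log p(\bm{x})$. Multiplying out $(\partial_{x_l} + s_l(\bm{x}))(\partial_{y_l} + s_l(\bm{y})) k(\bm{x},\bm{y})$ and summing over $l$, one recovers exactly the four terms in the displayed formula for $k_P$, so
\[ k_P(\bm{x},\bm{y}) = \sum_{l=1}^d T_l^{\bm{x}} T_l^{\bm{y}} k(\bm{x},\bm{y}). \]
Symmetry $k_P(\bm{x},\bm{y}) = k_P(\bm{y},\bm{x})$ is then immediate from symmetry of $k$ and the commuting of $T_l^{\bm{x}}$ with $T_l^{\bm{y}}$.

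For positive definiteness, I would invoke the standard fact that, under the assumed continuity and uniform boundedness of mixed partials $\partial^{(\bm{\alpha},\bm{\beta})} k$ with $|\bm{\alpha}|,|\bm{\beta}| \leq 1$, the derivative $\partial_{x_l} k(\cdot,\bm{x})$ lies in $\mathcal{H}(k)$ and obeys the differentiated reproducing identity $\langle f, \partial_{x_l} k(\cdot,\bm{x})\rangle_{\mathcal{H}(k)} = \partial_{x_l} f(\bm{x})$ for every $f \in \mathcal{H}(k)$. Setting $\xi_l(\bm{x}) := T_l^{\bm{x}} k(\cdot,\bm{x}) = \partial_{x_l} k(\cdot,\bm{x}) + s_l(\bm{x}) k(\cdot,\bm{x}) \in \mathcal{H}(k)$, a direct computation using the reproducing property twice yields $\langle \xi_l(\bm{x}), \xi_l(\bm{y})\rangle_{\mathcal{H}(k)} = T_l^{\bm{x}} T_l^{\bm{y}} k(\bm{x},\bm{y})$, so that for any $c_1,\dots,c_n \in \mathbb{R}$ and $\bm{x}_1,\dots,\bm{x}_n \in \mathbb{R}^d$,
\[ \sum_{i,j=1}^n c_i c_j\, k_P(\bm{x}_i,\bm{x}_j) \;=\; \sum_{l=1}^d \Bigl\| \sum_{i=1}^n c_i\, \xi_l(\bm{x}_i) \Bigr\|_{\mathcal{H}(k)}^{2} \;\geq\; 0. \]

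For the Stein identity $\int k_P(\bm{x},\bm{y})\, \mathrm{d}P(\bm{y}) = 0$, the key algebraic observation is that $T_l^{\bm{y}} k(\bm{x},\bm{y}) \cdot p(\bm{y}) = \partial_{y_l}[k(\bm{x},\bm{y}) p(\bm{y})]$. After pushing the outer operator $T_l^{\bm{x}}$ through the integral---a dominated-convergence step justified by boundedness of the first partials of $k$ together with $\int\|\nabla\log p\|\,\mathrm{d}P < \infty$---it remains to show $\int \partial_{y_l}[k(\bm{x},\bm{y}) p(\bm{y})]\,\mathrm{d}\bm{y} = 0$. Applying the divergence theorem on the ball $\|\bm{y}\| \leq R$ turns this into a surface term of magnitude at most $C R^{d-1} \sup_{\|\bm{y}\|=R} p(\bm{y})$, where $C$ bounds $k$; the decay hypothesis $\sup_{\|\bm{y}\|\geq r} r^{d-1} p(\bm{y}) \to 0$ then forces the limit as $R \to \infty$ to vanish.

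The main obstacle I anticipate is the bookkeeping around interchanging differentiation with integration, both in the ``derivative-reproducing'' identity underpinning positive definiteness and in pushing $T_l^{\bm{x}}$ through the $\mathrm{d}P(\bm{y})$ integral. The three technical hypotheses---bounded continuous first-order mixed partials of $k$, integrability of the score, and the $r^{d-1}p$-decay of $p$---are calibrated precisely to supply the dominating functions required in each case. Once these exchanges are in hand, the algebraic identity $k_P = \sum_l T_l^{\bm{x}} T_l^{\bm{y}} k$ does essentially all the remaining conceptual work.
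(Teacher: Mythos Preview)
Your proposal is correct and follows essentially the same strategy as the paper: express $k_P$ as the two-sided Stein transform of $k$, then obtain the Stein identity via the divergence theorem on balls of radius $R$, with the $r^{d-1}p$-decay hypothesis killing the boundary term. Two small differences are worth noting. First, the paper fixes $\bm{x}$ from the outset and writes $k_P(\bm{x},\cdot) = \mathcal{A}_P^{\bm{y}} g$ for the bounded vector field $g_l(\bm{y}) = T_l^{\bm{x}} k(\bm{x},\bm{y})$; this sidesteps your interchange of $T_l^{\bm{x}}$ with the $\mathrm{d}P(\bm{y})$ integral entirely, so the only limit exchange needed is the divergence theorem itself. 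Second, the paper's proof does not actually verify symmetry or positive definiteness of $k_P$; your argument via the derivative-reproducing property and the feature map $\xi_l(\bm{x}) \in \mathcal{H}(k)$ fills that gap cleanly, though as written it yields only positive \emph{semi}-definiteness.
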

\begin{proof}
First notice that
\begin{align*}
k_P(\bm{x},\bm{y}) = \mathcal{A}_P^{\bm{y}} \left[ \begin{array}{c} \vdots \\  \nabla_{x_i} k(\bm{x},\bm{y}) + k(\bm{x},\bm{y}) \nabla_{x_i} \log p(\bm{x})  \\ \vdots \end{array} \right] = \mathcal{A}_P^{\bm{y}} g(\bm{y})
\end{align*}
where, under our assumptions, (a) $\bm{y} \mapsto g(\bm{y})$ is bounded, and (b) $\bm{y} \mapsto \nabla_{\bm{y}} \cdot g(\bm{y})$ is integrable with respect to $P$.
Thus it suffices to show that $\int \mathcal{A}_P g \mathrm{d}P = 0$ for \emph{all} vector fields $g$ for which (a) and (b) hold.

Let $g$ be such a vector field, and let $B_r = \{\bm{x} \in \mathbb{R}^d : \|\bm{x}\| \leq r\}$ and $S_r = \{\bm{x} \in \mathbb{R}^d : \|\bm{x}\| = r\}$.
The main idea is to apply the divergence theorem (i.e. integrate by parts):
\begin{align*}
\int \mathcal{A}_P g \mathrm{d}P & = \int (\nabla \cdot g) + g \cdot (\nabla \log p) \mathrm{d}P \\
& = \int (\nabla \cdot (p g))(\bm{x}) \mathrm{d}\bm{x} \\
& = \lim_{r \rightarrow \infty} \int_{B_r} (\nabla \cdot (pg))(\bm{x}) \mathrm{d}\bm{x} \\
& = \lim_{r \rightarrow \infty} \oint_{S_r} p(\bm{x}) (g(\bm{x}) \cdot n(\bm{x})) \mathrm{d}\bm{x}
\end{align*}
where $n(\bm{x})$ is the outward unit normal to $S_r$ at $\bm{x}$.
(The regularity assumptions ensure that the integrals $\int (\nabla \cdot g) \mathrm{d}P$ and $\int g \cdot (\nabla \log p) \mathrm{d}P$ exist.)
Now
\begin{align*}
\oint_{S_r} p(\bm{x}) (g(\bm{x}) \cdot n(\bm{x})) \mathrm{d}\bm{x} & \leq \|g\|_\infty \sup_{\|\bm{x}\| \geq r} p(\bm{x}) \oint_{S_r} \mathrm{d}\bm{x} \\
& = \|g\|_\infty \sup_{\|\bm{x}\| \geq r} p(\bm{x}) \frac{2 \pi^{d/2}}{\Gamma(d/2)} r^{d-1} \\
& \rightarrow 0 \text{ as } r \rightarrow \infty ,
\end{align*}
where we have used the formula for the surface area of the radius $r$ sphere in $\mathbb{R}^d$.
\end{proof}

\begin{definition}[Kernel Stein discrepancy]
With $k_P$ defined in \Cref{lem: Stein kernel}, we call $D_{k_P}$ in \Cref{lem: Stein discrep} a \ac{ksd}.
\end{definition}

\begin{lemma} \label{lem: ksd explicit}
For $Q_n = \sum_{i=1}^n w_i \delta(\bm{x}_i)$, we have the explicit form of \ac{ksd}:
$$
D_{k_P}(Q_n) = \sqrt{ \sum_{i=1}^n \sum_{j=1}^n w_i w_j k_P(\bm{x}_i,\bm{x}_j) }
$$
\end{lemma}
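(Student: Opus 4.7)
The plan is to combine three facts already established in the excerpt: the MMD-as-embedding-distance identity (Lemma \ref{lem: mmd ad embed}), the defining property of a Stein kernel that makes $\mu_P$ vanish (Lemma \ref{lem: Stein discrep}, together with the construction in Lemma \ref{lem: Stein kernel}), and the reproducing property of the kernel $k_P$ on its own RKHS $\mathcal{H}(k_P)$.

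First I would note that, since $\int k_P(\cdot,\bm{x}) \mathrm{d}P(\bm{x}) = 0$ by Lemma \ref{lem: Stein kernel}, the kernel mean embedding satisfies $\mu_P = 0 \in \mathcal{H}(k_P)$. Next, for the discrete measure $Q_n = \sum_{i=1}^n w_i \delta(\bm{x}_i)$, the embedding is $\mu_{Q_n} = \int k_P(\cdot,\bm{x}) \mathrm{d}Q_n(\bm{x}) = \sum_{i=1}^n w_i k_P(\cdot,\bm{x}_i)$, which is a finite linear combination and therefore lies in $\mathcal{H}(k_P)$ without any appeal to completion. Applying Lemma \ref{lem: mmd ad embed} with the Stein kernel $k_P$ in place of $k$ yields
\begin{align*}
D_{k_P}(Q_n)^2 \;=\; D_{k_P}(P,Q_n)^2 \;=\; \|\mu_P - \mu_{Q_n}\|_{\mathcal{H}(k_P)}^2 \;=\; \left\| \sum_{i=1}^n w_i k_P(\cdot,\bm{x}_i) \right\|_{\mathcal{H}(k_P)}^2 .
\end{align*}

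Finally, I would expand this norm using bilinearity of the inner product followed by the reproducing property $\langle k_P(\cdot,\bm{x}_i), k_P(\cdot,\bm{x}_j) \rangle_{\mathcal{H}(k_P)} = k_P(\bm{x}_i,\bm{x}_j)$, obtaining
\begin{align*}
\left\langle \sum_{i=1}^n w_i k_P(\cdot,\bm{x}_i),\; \sum_{j=1}^n w_j k_P(\cdot,\bm{x}_j) \right\rangle_{\mathcal{H}(k_P)} \;=\; \sum_{i=1}^n \sum_{j=1}^n w_i w_j k_P(\bm{x}_i,\bm{x}_j) .
\end{align*}
Taking the (non-negative) square root gives the claimed identity.

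There is no serious obstacle here, since every ingredient has already been assembled. The only point worth remarking on is that, unlike the general MMD formula derived in \Cref{subsec: opt quant mmd} (which involves the integrals $\iint k \mathrm{d}P \mathrm{d}P$ and $\iint k \mathrm{d}P \mathrm{d}Q$), the two $P$-integral terms now vanish because $k_P$ is a Stein kernel — this is precisely what makes \ac{ksd} computable from knowledge of $\tilde{p}$ alone and is the whole point of the construction.
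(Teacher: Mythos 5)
Your proposal is correct and takes essentially the same route as the paper: the paper invokes the closed-form MMD expansion from \Cref{subsec: opt quant mmd} with $k_P$ in place of $k$ and observes that the two $P$-integral terms vanish, which is exactly your observation that $\mu_P = 0$ so $D_{k_P}(Q_n)^2 = \|\mu_{Q_n}\|_{\mathcal{H}(k_P)}^2$, expanded via the reproducing property. The two phrasings are the same computation, and no gap is present.
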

\begin{proof}
Immediate from the closed form expression for \ac{mmd} in \Cref{subsec: opt quant mmd}, with $k_P$ in place of $k$ and using the fact that $\int k_P(\bm{x},\bm{y}) \mathrm{d}P(\bm{y}) = 0$ for all $\bm{x} \in \mathbb{R}^d$ from \Cref{lem: Stein kernel}.
\end{proof}

As with \ac{mmd}, we can establish properties analogous to characteristicness and convergence control for \ac{ksd}.
Here we focus on the stronger property of convergence control:

\begin{theorem} \label{thm: ksd cc}
Let $P$ be \emph{distantly dissipative}, meaning that $\lim \inf_{r \rightarrow \infty} \kappa(r) > 0$ where
$$
\kappa(r) = \inf\left\{ -2 \frac{(\langle \nabla \log p)(\bm{x}) - (\nabla \log p)(\bm{y}) , \bm{x} - \bm{y} \rangle}{\|\bm{x} - \bm{y}\|^2} : \|\bm{x} - \bm{y}\| = r \right\} .
$$
Consider the kernel $k(\bm{x},\bm{y}) = (\sigma^2 + \|\bm{x} - \bm{y}\|^2)^{-\beta}$ for some fixed $\sigma > 0$ and a fixed exponent $\beta \in (0,1)$.
Then $D_{k_P}(Q_n) \rightarrow 0$ implies $Q_n \Rightarrow P$.
\end{theorem}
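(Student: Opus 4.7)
The plan is to implement the \emph{Stein's method} programme: for each bounded Lipschitz test function $h : \mathbb{R}^d \to \mathbb{R}$, construct a vector field $\mathbf g_h$ that solves the Stein equation $\mathcal{A}_P \mathbf g_h = h - \int h \, \mathrm{d}P$ and verify that $\mathbf g_h$ lies in the (coordinate-wise) RKHS $\mathcal{H}(k)^d$ with norm controlled by $\|h\|_\infty$ and $\mathrm{Lip}(h)$. Once both are in hand, \Cref{lem: Stein discrep} (applied coordinate-wise, using that $\mathcal{A}_P$ maps the unit ball of $\mathcal{H}(k)^d$ into the unit ball of $\mathcal{H}(k_P)$) gives
\[
\left| \int h \, \mathrm{d}Q_n - \int h \, \mathrm{d}P \right| = \left| \int \mathcal{A}_P \mathbf g_h \, \mathrm{d}Q_n \right| \leq \|\mathbf g_h\|_{\mathcal{H}(k)^d} \cdot D_{k_P}(Q_n) \to 0 ,
\]
so that $\int h \, \mathrm{d}Q_n \to \int h \, \mathrm{d}P$ for every bounded Lipschitz $h$.

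To produce $\mathbf g_h$, I would work with the overdamped Langevin semigroup $(P_t)_{t \geq 0}$ associated with the SDE $\mathrm{d}\bm X_t = \nabla \log p(\bm X_t)\,\mathrm{d}t + \sqrt{2}\,\mathrm{d}\bm W_t$, whose infinitesimal generator satisfies $\mathcal{L}f = \mathcal{A}_P \nabla f$. Setting
\[
\mathbf g_h(\bm x) = -\int_0^\infty \nabla \bigl(P_t h - \textstyle\int h\,\mathrm{d}P\bigr)(\bm x)\,\mathrm{d}t ,
\]
a coupling argument for the Langevin diffusion, powered by distant dissipativity ($\liminf_{r\to\infty} \kappa(r) > 0$), yields geometric ergodicity of $(P_t)$ and, through it, uniform bounds on $\|\mathbf g_h\|_\infty$, $\mathrm{Lip}(\mathbf g_h)$ and on the higher derivatives that are needed to invoke \Cref{lem: Stein kernel}, in terms of $\|h\|_\infty$ and $\mathrm{Lip}(h)$ only. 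A direct calculation (equivalently, the fundamental theorem of calculus along the semigroup) then verifies $\mathcal{A}_P \mathbf g_h = h - \int h \, \mathrm{d}P$.

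Next, and this is where the hypothesis $\beta \in (0,1)$ and the specific form of the inverse multiquadric kernel become indispensable, I would show that the bounded Lipschitz vector field $\mathbf g_h$ is actually in $\mathcal{H}(k)^d$ with $\|\mathbf g_h\|_{\mathcal{H}(k)^d} \leq C(\|h\|_\infty, \mathrm{Lip}(h))$. The translation-invariant kernel $k(\bm x, \bm y) = (\sigma^2 + \|\bm x - \bm y\|^2)^{-\beta}$ has a strictly positive spectral density that decays only polynomially at infinity; this slow decay makes $\mathcal{H}(k)$ rich enough to contain every bounded Lipschitz function, while the restriction $\beta \in (0,1)$ is simultaneously what keeps the spectral density integrable (so that \Cref{lem: Stein kernel} is applicable and $\mathbf g_h$ is not over-smoothed out of the RKHS). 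Establishing this embedding, together with the quantitative norm bound, is the principal technical obstacle; it requires a Fourier-analytic (Bochner) characterisation of $\mathcal{H}(k)$ and an explicit estimate of $\|\mathbf g_h\|_{\mathcal{H}(k)^d}$ via the Langevin-semigroup representation of $\mathbf g_h$.

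Finally, I would promote convergence of integrals against bounded Lipschitz functions to weak convergence by proving tightness of $(Q_n)$. Applying the bound above to the family of bounded Lipschitz cutoff functions $h_R(\bm x) = \min\bigl(1, (\|\bm x\| - R)_+\bigr)$ yields $\int h_R \, \mathrm{d}Q_n \to \int h_R \, \mathrm{d}P$ for every fixed $R$, and since $P$ itself is tight (distant dissipativity implies $P$ has exponentially decaying tails), the mass of $Q_n$ outside balls of radius $R$ can be made uniformly small by choosing $R$ large. Portmanteau's theorem then upgrades convergence on bounded Lipschitz functions to weak convergence, giving $Q_n \Rightarrow P$.
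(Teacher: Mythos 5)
The central step of your plan --- placing the Stein solution $\mathbf g_h$ inside $\mathcal{H}(k)^d$ with norm controlled by $\|h\|_\infty$ and $\mathrm{Lip}(h)$ --- cannot work for the inverse multiquadric kernel, and the supporting claim that its spectral density ``decays only polynomially at infinity'' is the opposite of the truth. The Fourier transform of $\bm x \mapsto (\sigma^2 + \|\bm x\|^2)^{-\beta}$ is, up to constants, $\|\bm\xi\|^{\beta - d/2}\,K_{d/2-\beta}(\sigma\|\bm\xi\|)$ where $K_\nu$ is a modified Bessel function of the second kind, so it decays \emph{exponentially} in $\|\bm\xi\|$. (You may be conflating the inverse multiquadric with the Mat\'{e}rn family, which has an inverse multiquadric \emph{as} its spectral density; the two are Fourier duals.) Consequently $\mathcal{H}(k)$ consists of extremely smooth --- indeed real-analytic --- functions, and it does not contain the Stein solution $\mathbf g_h$ for generic bounded Lipschitz $h$, nor the piecewise-linear cut-offs $h_R$ that your tightness argument relies on. The master inequality $|\int h\,\mathrm dQ_n - \int h\,\mathrm dP| \leq \|\mathbf g_h\|_{\mathcal{H}(k)^d}\,D_{k_P}(Q_n)$ is therefore unavailable, and the role of $\beta \in (0,1)$ is also misidentified: it governs the physical-space tail decay of $k$, not the integrability of the spectral density (which holds for every $\beta > 0$).

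The paper's own proof is simply a citation to Theorem 8 of \citet{gorham2017measuring}, whose argument is structured precisely to avoid any claim that Stein solutions live in the RKHS. The heavy polynomial tails of $k$ in physical space (this is where $\beta < 1$ is used), combined with distant dissipativity, let one exhibit witnesses in the unit ball of $\mathcal{H}(k_P)$ that detect mass escaping to infinity, so that $D_{k_P}(Q_n) \to 0$ forces $(Q_n)$ to be uniformly tight; given tightness one argues along weakly convergent subsequences to conclude that any limit must have zero discrepancy from $P$ and hence equal $P$. The Langevin-semigroup solution of the Stein equation that you describe is genuinely part of this circle of ideas --- it is how \citet{gorham2015measuring} established that the \emph{non-kernelised} Stein discrepancy over a Sobolev-type ball metrises weak convergence for distantly dissipative $P$ --- but the connection to the kernelised discrepancy is not made via RKHS membership of $\mathbf g_h$. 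Indeed the exercise solutions in these notes cite Theorem 6 of \citet{gorham2017measuring} for the fact that the Gaussian-kernel KSD \emph{fails} to control weak convergence when $d \geq 3$; if bounded Lipschitz Stein solutions could be pushed into an arbitrary characteristic translation-invariant RKHS with controlled norm, that failure could not occur.
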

\begin{proof}
Theorem 8 in \citet{gorham2017measuring}.
\end{proof}

\noindent \Cref{thm: ksd cc} justifies attempting to minimise \ac{ksd} from the point of view of quantisation, which we will discuss next.

\subsection{Optimal Quantisation}

The simplest use of \ac{ksd} for quantisation is as follows:

\begin{exercise}[Optimal quantisation with KSD] \label{ex: Stein points}
Consider $P = \mathcal{N}(\bm{0},\bm{I})$ and $k(\bm{x},\bm{y}) = (\sigma^2 + \|\bm{x} - \bm{y}\|^2)^{-1/2}$ on $\mathcal{X} = \mathbb{R}^d$.
(You may wish to focus on $d=1$ or $d=2$.)
\begin{enumerate}[label=(\alph*)]
\item Verify that $P$ and $k$ satisfy the conditions of \Cref{thm: ksd cc}.
\item Calculate (analytically) the Stein kernel $k_P(\bm{x},\bm{y})$.
\item For a fixed value of $n$ (e.g. $n=10$) and a fixed value of $\sigma$ (e.g. $\sigma = 1$), try to numerically optimise the locations of the states $\bm{x}_1,\dots,\bm{x}_n$ in order to minimise $D_k(P,\frac{1}{n} \sum_{i=1}^n \delta(\bm{x}_i) )$.
\item What effect does varying the bandwidth parameter $\sigma$ have on the approximations that are produced?
\item What happens if instead the Gaussian kernel $k(\bm{x},\bm{y}) = \exp(- \|\bm{x}-\bm{y}\|^2 / \sigma^2)$ is used?
\end{enumerate}
\end{exercise}

\noindent This provides an optimisation-based alternative to popular sampling-based algorithms, such as \ac{mcmc} and \ac{smc}.
See \Cref{fig: sp}.

\begin{figure}
\centering
\includegraphics[width = 0.49\textwidth,clip,trim = 4cm 10.5cm 4cm 9cm]{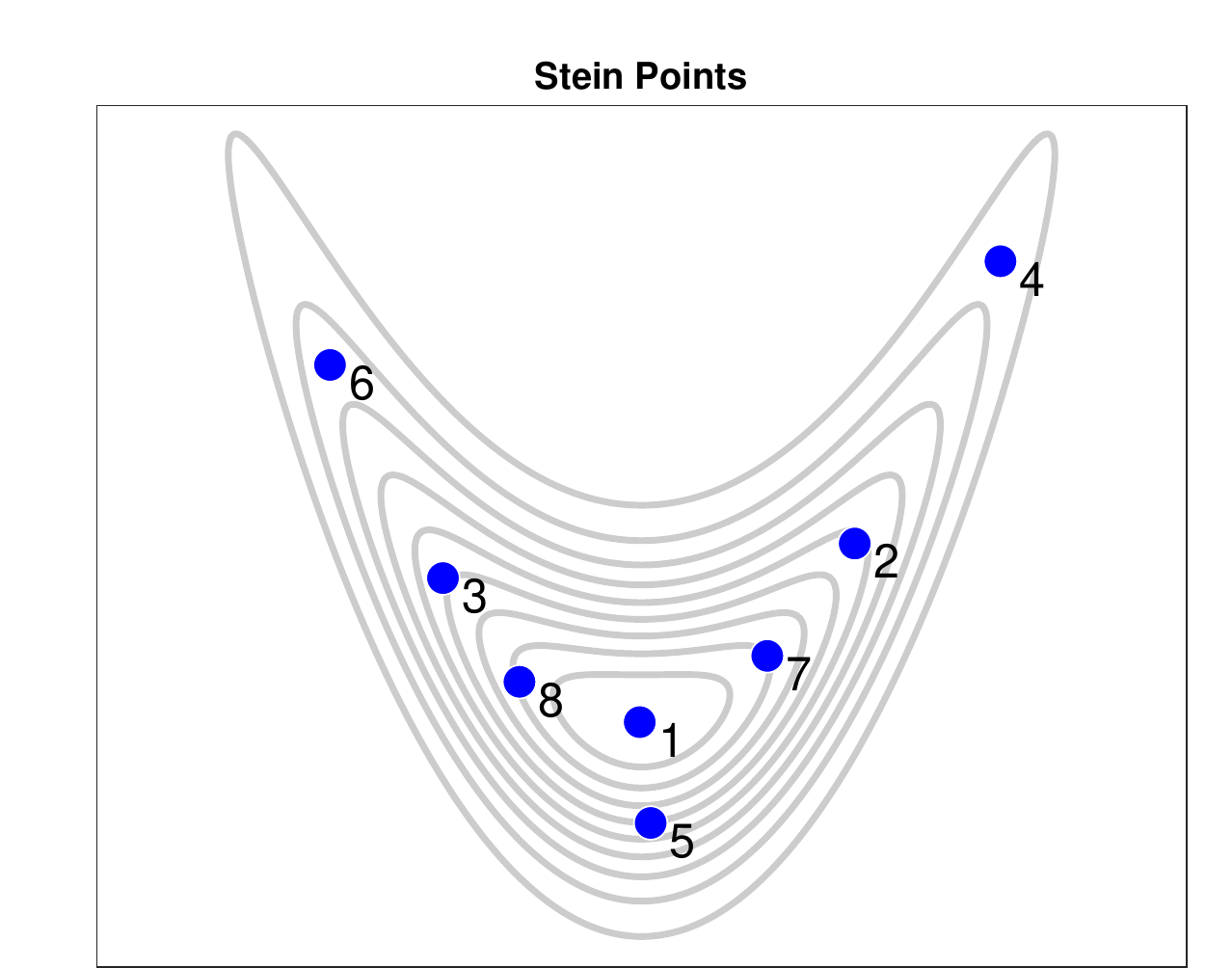}
\includegraphics[width = 0.49\textwidth,clip,trim = 4cm 10.5cm 4cm 9cm]{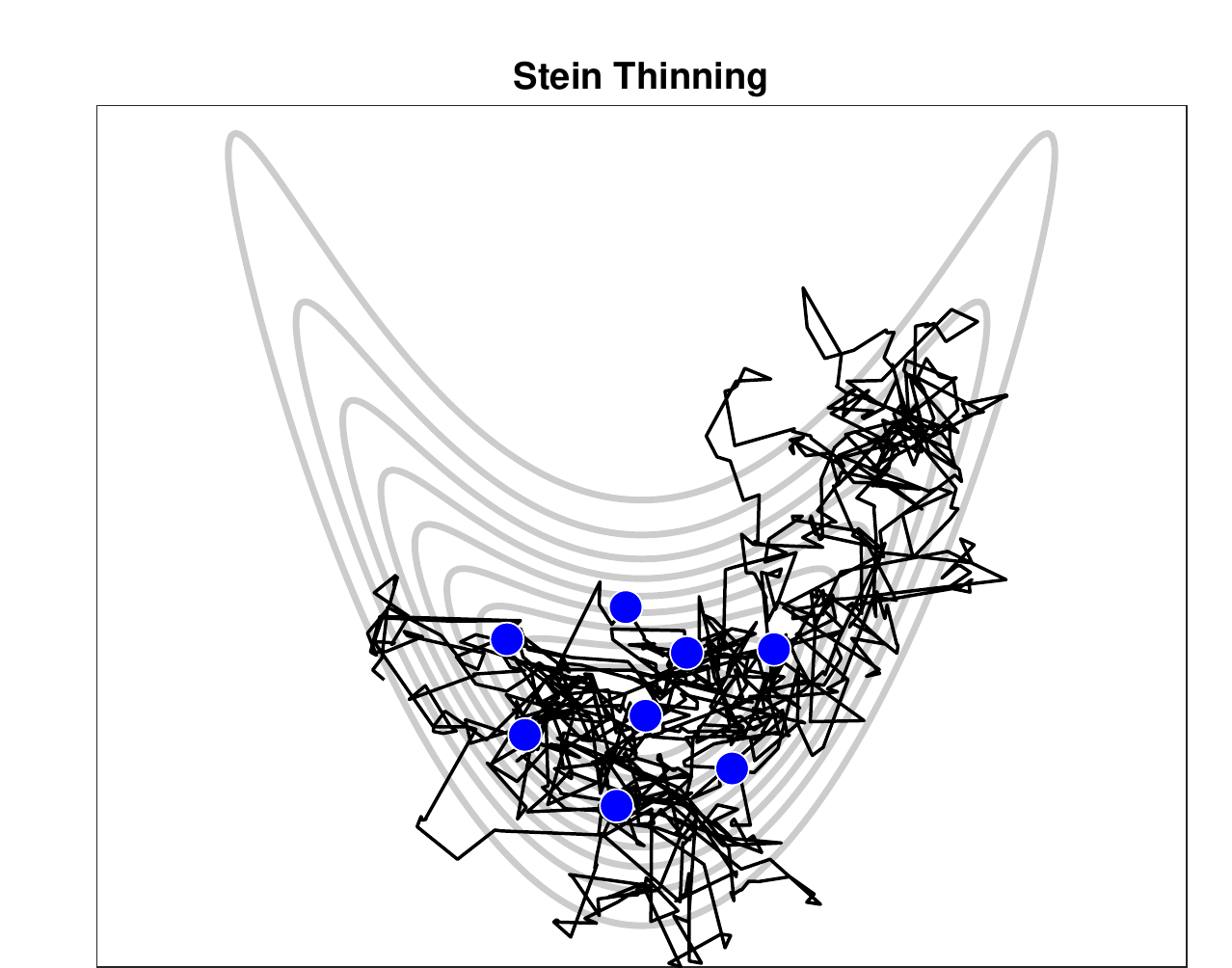}
\caption{\textit{Stein points} (left) \textit{and Stein thinning} (right)\textit{:} 
Stein points are generated by sequential (greedy) minimisation of \ac{ksd} to select $n=8$ states $\{\bm{x}_i\}_{i=1}^n$ in an approximation $Q_n$ to $P$.  The numbers indicate the order in which the states $\bm{x}_i$ were selected.
Stein thinning restricts the continuous inner-loop optimisation problem in Stein points to a discrete search over a \ac{mcmc} sample path (black).
}
\label{fig: sp}
\end{figure}

\begin{remark}
The Gaussian distribution $P$ in \Cref{ex: Stein points} is analytically tractable, so \ac{ksd} is not actually needed and \ac{mmd} can be used.
You are encouraged to extend \Cref{ex: Stein points} by replacing $P$ with an intractable posterior distribution arising in an application of Bayesian statistics.
\end{remark}

\begin{remark}
Sequential greedy optimisation provides one practical solution to part (c) of \Cref{ex: Stein points}, and was studied in detail in \citet{chen2018stein,chen2019stein}.
States $\{\bm{x}_i\}_{i=1}^n$ constructed in this way were called \emph{Stein points}.
\end{remark}

\subsection{Optimal Approximation}

In challenging applications of Bayesian statistics, the optimisation over $\mathcal{X}$ that was required to perform \Cref{ex: Stein points} will be difficult.
Nevertheless, approximate sampling from $P$ may still be possible using \ac{mcmc} or \ac{smc}.
Through the optimisation of weights, Stein discrepancy provides an means to improve the approximations produced by \ac{mcmc} or \ac{smc}, in a similar spirit to how importance sampling is sometimes used.

However, if we were to apply \Cref{lem: opt weights} with the kernel $k_P$ in place of $k$ we would obtain a degenerate solution, since $z_i = \int k_P(\cdot, \bm{x}_i) \mathrm{d}P = 0$ and optimal weights are $\bm{w} = \bm{0}$.
This makes sense, since we know that all $f \in \mathcal{H}(k_P)$ integrate to 0.
In order to make progress we need an additional constraint on the weights, and for this purpose it is natural to impose that $w_1 + \dots + w_n = 1$.
This leads to the following extension of \Cref{lem: opt weights}, which we present for a Stein kernel:

\begin{lemma} \label{lem: opt weights stein}
Let $\bm{x}_1,\dots,\bm{x}_n \in \mathcal{X}$ be distinct.
The optimal weights 
\begin{align*}
\argmin_{\substack{\bm{w} \in \mathbb{R}^n \\ \bm{1}^\top \bm{w} = 1}} D_{k_P}\left(\sum_{i=1}^n w_i \delta(\bm{x}_i) \right)
\end{align*}
are 
\begin{align*}
\bm{w} & = \frac{\bm{K}_P^{-1} \bm{1}}{\bm{1}^\top \bm{K}_P^{-1} \bm{1}}
\end{align*}
where $[K_P]_{ij} = k_P(\bm{x}_i,\bm{x}_j)$.
\end{lemma}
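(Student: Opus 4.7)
The plan is to recognise that this is a simple equality-constrained convex quadratic programming problem and to solve it by Lagrange multipliers. The objective is expressed in closed form via \Cref{lem: ksd explicit}: we have
\begin{equation*}
D_{k_P}\!\left(\sum_{i=1}^n w_i \delta(\bm{x}_i)\right)^2 = \bm{w}^\top \bm{K}_P \bm{w},
\end{equation*}
so minimising \ac{ksd} subject to $\bm{1}^\top \bm{w} = 1$ is equivalent to minimising the quadratic $\bm{w}^\top \bm{K}_P \bm{w}$ over the affine hyperplane $\{\bm{w} : \bm{1}^\top \bm{w} = 1\}$.

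First I would note that, since $k_P$ is a symmetric positive definite kernel (by \Cref{lem: Stein kernel}) and the states $\bm{x}_i$ are assumed distinct, the Gram matrix $\bm{K}_P$ is symmetric positive definite and hence invertible; in particular the objective is strictly convex, so any critical point satisfying the linear constraint is the unique global minimum. Next I would write the Lagrangian $L(\bm{w},\lambda) = \bm{w}^\top \bm{K}_P \bm{w} - \lambda(\bm{1}^\top \bm{w} - 1)$ and set $\nabla_{\bm{w}} L = 2 \bm{K}_P \bm{w} - \lambda \bm{1} = 0$, which yields $\bm{w} = (\lambda/2)\bm{K}_P^{-1}\bm{1}$.

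Finally, I would substitute this expression into the constraint $\bm{1}^\top \bm{w} = 1$ to solve for the multiplier: $1 = (\lambda/2) \bm{1}^\top \bm{K}_P^{-1} \bm{1}$, so $\lambda/2 = 1/(\bm{1}^\top \bm{K}_P^{-1}\bm{1})$, where positive definiteness of $\bm{K}_P^{-1}$ guarantees $\bm{1}^\top \bm{K}_P^{-1}\bm{1} > 0$. Plugging back gives the claimed formula. There is no real obstacle here; the only thing worth being careful about is invoking positive definiteness of $\bm{K}_P$ (to justify both uniqueness of the minimiser and non-vanishing of the denominator), and this is already guaranteed by \Cref{lem: Stein kernel} together with distinctness of the $\bm{x}_i$.
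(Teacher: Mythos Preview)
Your proposal is correct and follows exactly the same approach as the paper: invoke \Cref{lem: ksd explicit} to reduce to minimising $\bm{w}^\top \bm{K}_P \bm{w}$ subject to $\bm{1}^\top \bm{w} = 1$, then solve via Lagrange multipliers. The paper's proof is terser (it simply states that Lagrange multipliers yield the result), whereas you have helpfully spelled out the stationarity condition, the substitution into the constraint, and the role of positive definiteness in guaranteeing uniqueness and a nonzero denominator.
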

\begin{proof}
From \Cref{lem: ksd explicit} we have 
$$
D_{k_P}\left(\sum_{i=1}^n w_i \delta(\bm{x}_i) \right)^2 = \bm{w}^\top \bm{K}_P \bm{w}, 
$$
so the optimisation problem is
\begin{align*}
\argmin \bm{w}^\top \bm{K}_P \bm{w} \qquad \text{s.t.} \qquad \bm{1}^\top \bm{w} = 1 .
\end{align*}
This can be solved using the method of Lagrange multipliers to obtain the stated result.
\end{proof}

\noindent See the left panel of \Cref{fig: bbis}. 
As with optimal weights for \ac{mmd}, the linear system which must be solved can be numerically ill-conditioned when $n$ is large, or when two of the $\bm{x}_i$ are very close or identical.
Techniques for numerical regularisation could be used.

\begin{exercise}[Bias correction with KSD] \label{exam: bias correct}
Consider again the distribution $P$ and kernel $k$ from \Cref{ex: Stein points}.
\begin{enumerate}[label=(\alph*)]
\item Generate and store a sequence $(\bm{x}_i)_{i=1}^{100}$ of independent samples from $P$.
\item For each $n = 1,\dots,100$, calculate and plot the values of
$$
D_k\left(P, \frac{1}{n} \sum_{i=1}^n \delta(\bm{x}_i) \right)
\qquad 
\text{and}
\qquad
D_k\left(P, \sum_{i=1}^n w_i^{(n)} \delta(\bm{x}_i) \right)
$$
where $\bm{w}^{(m)} = (w_1^{(m)},\dots,w_n^{(m)})$ are the optimal weights from \Cref{lem: opt weights stein}, based on the states $\{\bm{x}_i\}_{i=1}^m$.
What do you observe?
\item What happens if, instead, we consider $n = 1,\dots,1000$?
\end{enumerate}
\end{exercise}

\begin{figure}
\centering
\includegraphics[width = 0.49\textwidth,clip,trim = 4cm 10.5cm 4cm 9cm]{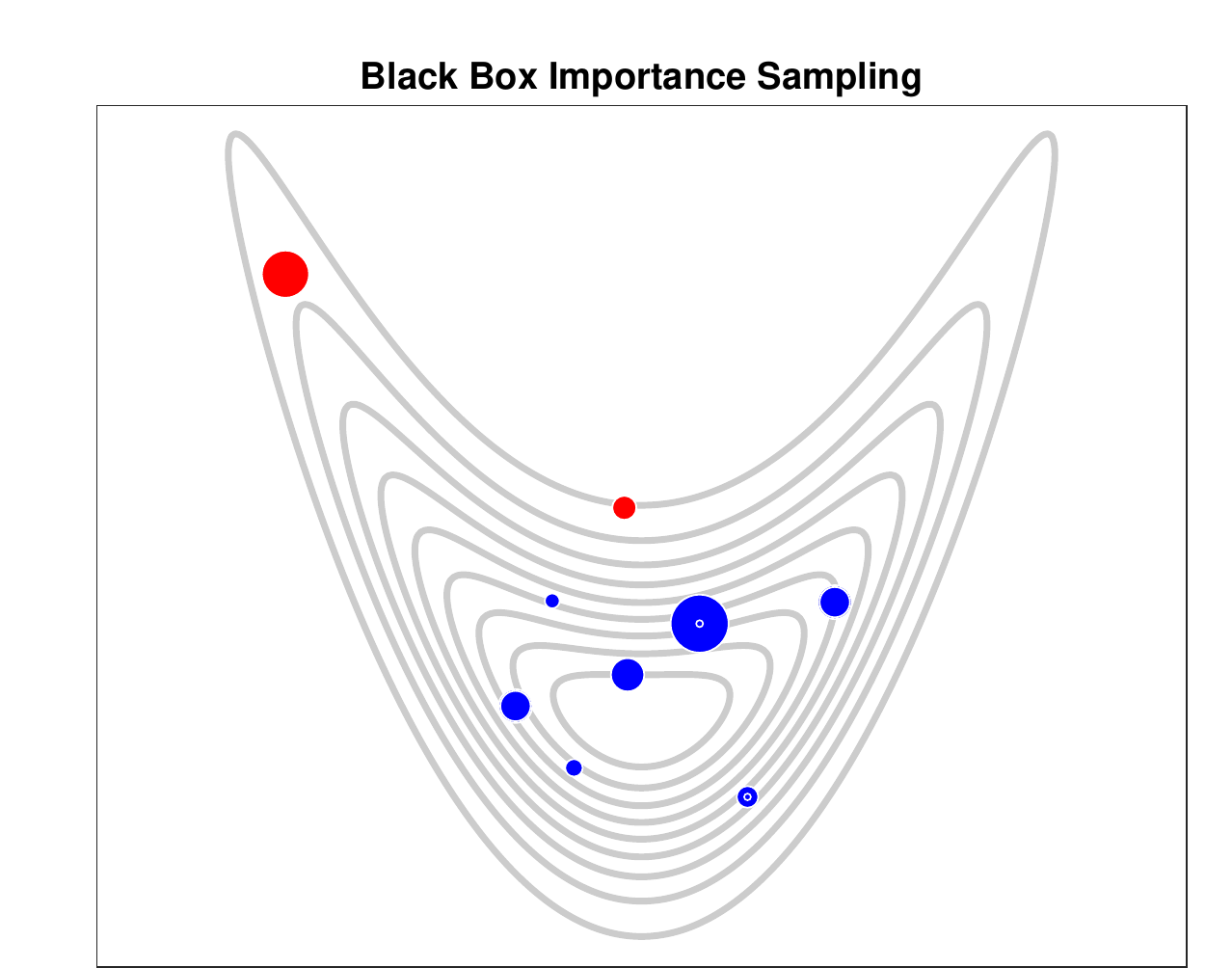}
\includegraphics[width = 0.49\textwidth,clip,trim = 4cm 10.5cm 4cm 9cm]{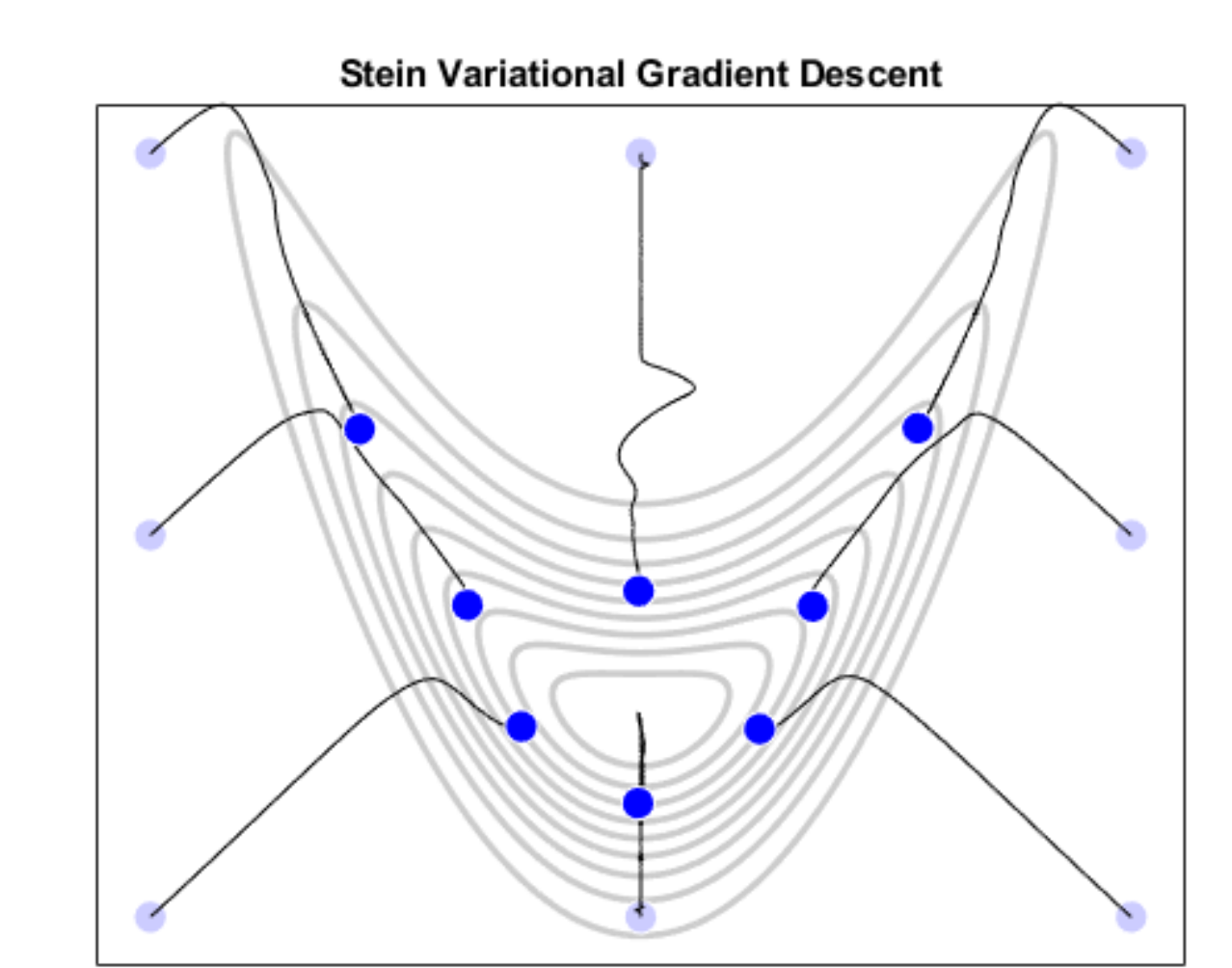}
\caption{\textit{Black box importance sampling} (left) \textit{and Stein variational gradient descent} (right)\textit{:}
In black box importance sampling the weights $w_1,\dots,w_n$ are obtained by minimising \ac{ksd} in the manner of \Cref{lem: opt weights stein}.
Blue indicates states $\bm{x}_i$ with positive weights $w_i>0$, while red indicates negative weights $w_i < 0$.
The size of the circles is proportional to $|w_i|$.
Stein variational gradient descent \citep{liu2016stein} is one of a number of recently developed algorithms based on \ac{ksd}; here an initial discrete distribution (light blue circles) are evolved in time (black lines) toward an optimal quantisation (dark blue circles) of $P$.
}
\label{fig: bbis}
\end{figure}

Remarkably, the use of Stein discrepancy in \Cref{exam: bias correct} can provide consistent approximations of $P$ even if the states $\bm{x}_i$, generated in step (a), are not sampled from $P$ (provided, at least, that they are sampled from a distribution that is not \textit{too} different from $P$).
See \citet{liu2017black,hodgkinson2020reproducing} and Theorem 3 of \citet{riabiz2020optimal}.
This is reminiscent of \textit{importance sampling}, and \citet{liu2017black} termed it \textit{black-box importance sampling}.

\begin{exercise}[Pathologies of KSD] \label{exer: pathologies}
The purpose of this final exercise is to illustrate one of the main weaknesses of \ac{ksd}; insensitivity to distant high probability regions.
Consider the Gaussian mixture model
$$
p(x) \propto \exp(-x^2) + \exp(-(x-c)^2)
$$
for $c \geq 0$.
\begin{enumerate}[label=(\alph*)]
\item Calculate (analytically) the gradient $(\nabla \log p)(x)$. 
\item Derive the kernel $k_P(x,y)$, using $k(x,y) = (1 + (x-y)^2)^{-1/2}$ (to ensure convergence control; c.f. \Cref{thm: ksd cc}).
\item Generate and store an independent sample $(x_i)_{i=1}^n$ from $\mathcal{N}(0,1)$, with $n = 100$.
\item Plot $D_{k_P}(\frac{1}{n} \sum_{i=1}^n \delta(x_i))$ as a function of $c \in (0,10)$.
\item What do you conclude about the sensitivity of \ac{ksd} to distant high probability regions?
\end{enumerate}
\end{exercise}

\paragraph*{Chapter Notes}
The canonical Stein operator is sometimes called the \emph{Langevin--Stein operator} due to its close connection with the generator of a Langevin diffusion process \citep{barbour1988stein,barbour1990stein,gorham2015measuring}.
The general concept of a Stein discrepancy, in \Cref{lem: Stein discrep}, was introduced in \citet{gorham2015measuring}.
The Stein kernel was introduced in \citet{oates2017control} and \ac{ksd} was later introduced simultaneously in \citet{chwialkowski2016kernel,liu2016kernelized}.
\Cref{lem: Stein kernel} can be found in \citet{south2020semi}.
\Cref{thm: ksd cc} was slightly generalised to allow for invertible linear transformations of $\bm{x}$ in the kernel $k$ in Theorem 4 of \citet{chen2019stein}.
The optimal weights in \Cref{lem: opt weights stein} are numerically ill-conditioned when $n$ is large; to address this, \citet{riabiz2020optimal} showed that greedy subset selection can be almost as accurate, but with lower computational complexity.
Stein discrepancy is an active research topic at the moment, with many extensions attracting attention, such as to non-Euclidean domains \citep{barp2018riemann}, to other function classes besides kernels \citep{si2020scalable,grathwohl2020learning}, and inspiring new algorithms such as \textit{Stein variational gradient descent} \citep{liu2016stein} (see the right panel of \Cref{fig: bbis}).
For a recent literature review, see \citet{anastasiou2021stein}.

\footnotesize

\bibliographystyle{plainnat}

\normalfont

\newpage

\section{Partial Solutions to Exercises}

This section contains solutions to analytic parts of the exercises in the main text.
For the numerical part of the exercises, Matlab solutions are provided.

\subsection{\Cref{ex: Gaussian kernel cubature}}

\paragraph{(a)}

In dimension $d$ we have the kernel mean embedding
$$
\mu_P(\bm{x}) = \left( \frac{\sigma^2}{2 + \sigma^2} \right)^{d/2} \exp\left\{ - \frac{1}{(2 + \sigma^2)} \|\bm{x}\|^2 \right\} .
$$

\subsection{\Cref{exer: rates mmd}}

\paragraph{(a)}

The maximum value operator $z \mapsto z_+$ that appears in the kernels of \Cref{ex: Sobolev kernels} is irrelevant when restricting to $\mathcal{X} = [0,1]$, since $1 - |x-y| \geq 0$ for all $x,y \in [0,1]$.
Thus we consider the following kernels on $\mathcal{X} = [0,1]$:
\begin{center}
\begin{tabular}{|l|l|} \hline
$k(x,y)$ \; \; {\normalfont ($r = |x-y|$, $x,y \in [0,1]$)} & {\normalfont order} \\ \hline
$(1 - r )$ & $s=1$ \\
$(1 - r)^3 (3r+1)$  & $s=2$  \\
$(1 - r)^5 (8r^2 + 5r + 1)$ &  $s=3$ \\ \hline
\end{tabular}
\end{center}

\noindent Splitting $\int_0^1 k(x,y) \mathrm{d}y$ into a sum of $\int_0^x k(x,y) \mathrm{d}{y}$ and $\int_x^1 k(x,y) \mathrm{d}y$, the integrals become straight forward and we can evaluate them analytically:

\begin{center}
\begin{tabular}{|l|l|l|} \hline
{\normalfont order} & $\mu_P(x)$ & $\iint k(x,y) \mathrm{d}x \mathrm{d}y$ \\ \hline
$s=1$ & $-x^2 + x + \frac{1}{2}$ & $\frac{2}{3}$ \\
$s=2$  & $x^4 - 2x^3 + x + \frac{2}{5}$ & $\frac{3}{5}$  \\
$s=3$ &  $-2x^8+8x^7-\frac{35}{3}x^6 + 7x^5 - \frac{7}{3}x^3 + x + \frac{1}{3}$ & $\frac{19}{36}$ \\ \hline
\end{tabular}
\end{center}

\subsection{\Cref{ex: Stein points}}

\paragraph{(b)}

Let $u(\bm{x}) = (\nabla \log p)(\bm{x})$ and consider the kernel $k(\bm{x},\bm{y}) =  (1 + \|\bm{x}-\bm{y}\|^2 )^{-1/2}$.
For the case where $P = \mathcal{N}(\bm{0},\bm{I})$ we have $u(\bm{x}) = -\bm{x}$.
Then we compute
\begin{align*}
	\nabla_{\bm{x}} k(\bm{x},\bm{y}) & = - \frac{1}{ \left(1 + \|\bm{x}-\bm{y}\|^2 \right)^{3/2} } (\bm{x}-\bm{y}) \\
	\nabla_{\bm{y}} k(\bm{x},\bm{y}) & = \frac{1}{ \left(1 + \|\bm{x}-\bm{y}\|^2 \right)^{3/2} } (\bm{x}-\bm{y}) \\
	\nabla_{\bm{x}} \cdot \nabla_{\bm{y}} k(\bm{x},\bm{y}) & = - \frac{3 \|\bm{x}-\bm{y}\|^2}{ \left(1 + \|\bm{x}-\bm{y}\|^2 \right)^{5/2} } + \frac{d}{ \left(1 + \|\bm{x}-\bm{y}\|^2 \right)^{3/2} } 
\end{align*}	
which leads to
\begin{align*}
	k_P(\bm{x},\bm{y}) & := \nabla_{\bm{x}} \cdot \nabla_{\bm{y}} k(\bm{x},\bm{y}) + [\nabla_{\bm{x}} k(\bm{x},\bm{y})]^\top u(\bm{y}) + [\nabla_{\bm{y}} k(\bm{x},\bm{y})]^\top u(\bm{x}) + k(\bm{x},\bm{y}) [u(\bm{x})^\top u(\bm{y})] \\
	& = - \frac{3 \|\bm{x}-\bm{y}\|^2}{ \left(1 + \|\bm{x}-\bm{y}\|^2 \right)^{5/2} } + 2 \beta \left[ \frac{ d + [u(\bm{x}) - u(\bm{y})]^\top (\bm{x}-\bm{y}) }{ \left(1 + \|\bm{x}-\bm{y}\|^2 \right)^{3/2} } \right] + \frac{ u(\bm{x})^\top u(\bm{y}) }{ \left(1 + \|\bm{x}-\bm{y}\|^2 \right)^{1/2} }
\end{align*}

\paragraph{(e)}

Theorem 5 of \citet{gorham2017measuring} proves that the Stein kernel $k_P$ based on the Gaussian kernel $k$ provides weak convergence control when $P$ is distantly dissipative (and $\nabla \log p$ is Lipschitz, under \Cref{ass: lipz}).
However, for $d \geq 3$, Theorem 6 of \citet{gorham2017measuring} proves that the corresponding \ac{ksd} does \textit{not} provide weak convergence control.

\subsection{\Cref{exer: pathologies}}

\paragraph{(a)}

The required gradient is
\begin{align*}
(\nabla \log p)(x) & = - 2 \left[ \frac{x \exp(-x^2) + (x-c) \exp(-(x-c)^2)}{\exp(-x^2) + \exp(-(x-c)^2)} \right] .
\end{align*}

\end{document}